\def\TALG{1}
\def\conf{0}
\def\soda{0}
\newcommand{\confFigure}[1]
{
\ifnum\conf=1
	\begin{figure}[htb!]
		\fbox{
			\begin{minipage}{0.465\textwidth}
				\begin{raggedright}
					#1					
				\end{raggedright}
			\end{minipage}
		}
	\end{figure}
\else
	\begin{figure}[htb!]
	\fbox{
		\begin{minipage}{0.9\textwidth}
			\begin{raggedright}
				#1					
			\end{raggedright}
		\end{minipage}
	}
\end{figure}

\fi
}
\newcommand{\ignore}[1]{}
\newtheorem{theorem}{Theorem}
\newtheorem{lemma}[theorem]{Lemma}
\newtheorem{claim}[theorem]{Claim}
\newtheorem{definition}[theorem]{Definition}
\newcommand{\EX}{\mathbb{E}}
\renewcommand{\th}{^\textrm{th}}
\newcommand{\eqdef}{\stackrel{\rm def}{=}}
\def\eps{\epsilon}
\def\bar{\overline}
\newenvironment{proofof}[1]{\smallskip\noindent{\bf Proof of #1:}}%
        {\hspace*{\fill}$\Box$\par}
\newcommand{\om}{\overline{m}}
\newcommand{\Sec}[1]{\hyperref[sec:#1]{\S\ref*{sec:#1}}} 
\newcommand{\Eqn}[1]{\hyperref[eq:#1]{(\ref*{eq:#1})}} 
\newcommand{\Fig}[1]{\hyperref[fig:#1]{Fig.\,\ref*{fig:#1}}} 
\newcommand{\Tab}[1]{\hyperref[tab:#1]{Tab.\,\ref*{tab:#1}}} 
\newcommand{\Thm}[1]{\hyperref[thm:#1]{Theorem\,\ref*{thm:#1}}} 
\newcommand{\Lem}[1]{\hyperref[lem:#1]{Lemma\,\ref*{lem:#1}}} 
\newcommand{\Prop}[1]{\hyperref[prop:#1]{Prop.~\ref*{prop:#1}}} 
\newcommand{\Cor}[1]{\hyperref[cor:#1]{Corollary~\ref*{cor:#1}}} 
\newcommand{\Def}[1]{\hyperref[def:#1]{Definition~\ref*{def:#1}}} 
\newcommand{\Alg}[1]{\hyperref[alg:#1]{Alg.~\ref*{alg:#1}}} 
\newcommand{\Ex}[1]{\hyperref[ex:#1]{Ex.~\ref*{ex:#1}}} 
\newcommand{\Clm}[1]{\hyperref[clm:#1]{Claim~\ref*{clm:#1}}} 
\newcommand{\Assum}[1]{\hyperref[assump:#1]{Assumption~\ref*{assump:#1}}} 
\newcommand{\comp}{{\tilde{O}}\left(\frac{n}{\eps\sqrt{m}}\right) +
	\left(\frac{1}{\eps}\right)^{O(\log(1/\eps))}}
\def\poly{{\rm poly}}
\newcommand{\dist}[2]{\eps_{#1,#2}}
\newcommand{\davg}{\overline{d}}
\newcommand{\mnote}[1]{\marginpar{\tiny\bf
		\begin{minipage}[t]{0.5in}
			\raggedright#1
		\end{minipage}}}
\newcommand\numberthis{\addtocounter{equation}{1}\tag{\theequation}}
\newcommand{\mG}{\mathcal{G}}
\renewcommand{\Pr}{\mathrm{Pr}}
\def\nofigures{1}
\def\withcolors{0}
\newcommand{\old}[1]{{\textcolor{gray}{ #1}} }
\newcommand{\changed}[1]{{\color{green}{#1}}}
\newcommand{\talya}[1]{{\color{blue}  #1}}
\newcommand{\strikeout}[1]{{\color{gray}  #1}}
\newcommand{\danachange}[1]{{\color{blue}#1}}
\newcommand{\tnew}[1]{{\color{red} #1}}
\newcommand{\changed}[1]{{{#1}}}
\newcommand{\old}[1]{}
\newcommand{\talya}[1]{{\color{black}  #1}}
\newcommand{\strikeout}[1]{}
\newcommand{\danachange}[1]{#1}
\newcommand{\tnew}[1]{{  #1}}
\newcommand{\rG}[1][\ell]{G_{#1}(0)}
\newcommand{\Gmax}[1][\ell]{G_{#1}(\eps)}
\newcommand{\Umax}[1][\ell]{A_{#1}(\eps)}
\newcommand{\Ualpha}[1][\ell]{A_{#1}(0)}
\newcommand{\td}{\widetilde{d}}
\newcommand{\Erd}{\hyperref[Erd]{\color{black} \bf Estimate-Remaining-Degree}}
\newcommand{\Ehd}{\hyperref[Ehe]{\color{black} \bf Estimate-High-Edges}}
\newcommand{\Erle}{\hyperref[Erle]{\color{black} \bf Estimate-Remaining-Edges}}
\newcommand{\Enoe}{\hyperref[Enoe]{\color{black} \bf Estimate-Number-of-Edges}}
\newcommand{\AssignEdges}{\hyperref[Ae]{\color{black} \bf Assign-Edges}}
\newcommand{\IsActive}{\hyperref[IsAct]{\color{black} \bf Is-Active}}
\newcommand{\RecursiveIsActive}{\hyperref[RecIsAct]{\color{black} \bf Recursive-Is-Active}}
\newcommand{\Seau}{\hyperref[Seau]{\color{black} \bf Sample-Edge-Almost-Uniformly}}
\newcommand{\Iba}{\hyperref[Iba]{\color{black} \bf Is-Bounded-Arboricity}}
\newcommand{\Sle}{\hyperref[Sle]{\color{black} \bf Sample-Light-Edge}}
\newcommand{\Ibgm}{\hyperref[Ibgm]{\color{black} \bf Is-Bounded-Arboricity-Given-Edges-Estimate}}
\newcommand{\act}{active}
\newcommand{\sets}{\frac{300	}{\eps}}
\newcommand{\setq}{\frac{\alpha n}{\eps ^2 \cdot \bar{m}}\cdot (300\ln(4/\delta)) }		
\newcommand{\setr}{\frac{200\ln(1/\delta)}{\eps}}
\renewcommand{\dist}{\eps'}
\newcommand{\settt}{\frac{600}{\eps}}
\renewcommand{\emph}[1]{\textsf{#1}}
\newcommand{\YES}{{\sf Yes}}
\newcommand{\NO}{{\sf No}}
\newcommand{\MANY}{{\sf Many}}
\newcommand{\FEW}{{\sf Few}}
\newcommand{\tm}{\widetilde{m}_{\ell}}
\begin{document}

\title{Testing bounded arboricity}
	
	\author{
	Talya Eden\thanks{CSAIL at MIT. Email: {\tt talyaa01@gmail.com}.}
		\and
		Reut Levi\thanks{Efi Arazi School of Computer Science,  The Interdisciplinary Center, Israel. Email: {\tt reut.levi1@idc.ac.il}.}
			\and
		Dana Ron\thanks{School of Electrical Engineering, Tel Aviv University. Email: {\tt danaron@tau.ac.il}.}
}

\maketitle
	
\begin{abstract}
	In this paper we consider the problem of testing whether a graph has bounded arboricity.
The class of graphs with bounded arboricity includes many important graph families (e.g.,  planar graphs and randomly generated preferential attachment graphs). Graphs with bounded arboricity have been studied extensively in the past, in particular, since for many problems they allow for much more efficient algorithms and/or better approximation ratios.

We present a tolerant tester in the general-graphs model. The general-graphs model allows access to degree and neighbor queries, and the distance is defined with respect to the actual number of edges. Namely, we say that a graph $G$ is $\epsilon$-close to having arboricity $\alpha$, if by removing at most an $\epsilon$-fraction of its edges we can obtain a graph $G'$ that has arboricity $\alpha$, and otherwise we say that $G$ is $\epsilon$-far.
Our algorithm distinguishes between graphs that are $\epsilon$-close to having arboricity $\alpha$ and graphs that are $c \cdot \epsilon$-far from having arboricity $3\alpha$, where $c$ is an absolute small constant.
The query complexity and running time  of the algorithm are $\tilde{O}\left(\frac{n}{\epsilon\sqrt{m}}\right) + \left(\frac{1}{\epsilon}\right)^{O(\log(1/\epsilon))}$ where $n$ denotes the number of vertices and $m$ denotes the number of edges  (we use the notation $\tilde{O}$ to hide poly-logarithmic factors in $n$).  In terms of the dependence on $n$ and $m$ this bound is optimal up to poly-logarithmic factors since $\Omega\left(\frac{n}{\sqrt{m}}\right)$ queries are necessary.

\end{abstract}

\section{Introduction}

The  arboricity of a graph is defined as the minimum number of forests into which its edges can be partitioned.
This measure is equivalent (up to a factor of 2) to the maximum average degree in any subgraph~\cite{NW1,Tutte,NW2} and to the degeneracy of the graph.\footnote{The degeneracy of a graph $G$ is the smallest integer $k$ such that in every subgraph of $G$ there is vertex of degree at most $k$.
The arboricity of $G$ is upper bounded by its degeneracy and  the degeneracy is less than twice the arboricity.}
Hence, the arboricity of a graph can be viewed as a measure of its density ``everywhere''.
The 
class of graphs with bounded arboricity includes many important families of graphs, e.g., all minor-closed graph classes such as planar graphs, graphs of bounded treewidth and graphs of bounded genus.
Furthermore, graphs in this class are not restricted to being minor-free (for some fixed minor).
In fact, graphs over $n$ vertices with arboricity $2$ may have a $K_{\sqrt{n}}$-minor.
In the context of social networks, 
graphs that are generated according to evolving graph models such as the Barab{\'{a}}si-Albert Preferential
  Attachment model~\cite{BA99} have bounded arboricity.
  For various graph optimization problems, it is known that better approximation ratios and faster algorithms exist for graphs with bounded arboricity
 (e.g.,~\cite{CN85,GV08, ELS13, BU17} 
 and~\cite{BE,LPW13} in the distributed setting), and several NP-hard problems such as {\sc Clique}, {\sc Independent-Set} and {\sc Dominating-Set} become fixed-parameter tractable~\cite{AG09, ELS13}.

In this work we address the problem of testing whether a graph has bounded arboricity.
That is, we are interested in an algorithm that with high constant probability accepts graphs that have  arboricity bounded by a
given $\alpha$, and rejects graphs that are relatively far from having slightly larger arboricity
(in the sense that relatively many edges should be removed so that the graph will have such arboricity).
In fact, as explained precisely next, we solve a {\em tolerant\/}~\cite{PRR} version of this problem in which we  accept graphs that are only close to having arboricity $\alpha$. Furthermore, our result is in what is known as the
{\em general-graphs model\/}~\cite{PR}, where there is no upper bound on the maximum degree in the graph, and
distance to having a property is measured with respect to the number of edges in the graph.
As we discuss in more detail in
Section~\ref{subsec:related}, almost  all previous results on testing related
bounded graph measures assumed the graph has bounded degree.



\subsection{Our result}\label{subsec:result}
Let $G = (V,E)$ be a graph with $n$ vertices and $m$ edges.
We assume that for any given vertex $v \in V$, it is possible to query for its degree, $d(v)$, as well as query
for its $i^{\rm th}$ neighbor for any $1 \leq i \leq d(v)$.
~\footnote{We note that the ordering of the neighbors of vertices is arbitrary and that a neighbor query to vertex $v$ with $i > d(v)$ is answered by a special symbol. Observe that a degree query to $v$ can be replaced by $O(\log d(v))$ neighbor queries.}\textsuperscript{,}\footnote{Usually the general-graphs model also allows for pair queries, however our algorithm does not require them, and the lower bounds holds also when allowing them.}
We say that $G$ is {\em $\eps$-close\/} to having arboricity $\alpha$ if at most $\eps\cdot m$ edges should be removed from $G$ so that the resulting graph will have arboricity at most $\alpha$. Otherwise, $G$ is {\em $\eps$-far\/} from having arboricity $\alpha$.

We present an algorithm that, given query access to $G$ together with parameters $n$, $\alpha$ and $\epsilon$
 distinguishes with high constant probability 
 between the case that $G$ is $\eps$-close to having arboricity at most $\alpha$, and the case 
that $G$ is
 $c\cdot \eps$-far from having arboricity $3\alpha$ for an absolute constant\footnote{The constant we achieve is 20.
 For the sake of simplicity and clarity of the algorithm and its analysis, we did not make an effort to minimize this constant.}
  $c$.
The query complexity and running time of the algorithm are
$$\comp$$
 in expectation.
\subsection{Discussion of the result}\label{subsec:discuss}

\changed{In this subsection we discuss several aspects of our result, as well as some variants.
The variants are summarized in Table~\ref{tab:variants}.
}

\paragraph{The tightness of the complexity bound.}
If we consider the complexity of the algorithm as a function of $n$ and $m$ (ignoring the dependence on $\eps$), we get that it is 
\changed{$\tilde{O}\left(\frac{n}{\sqrt{m}}\right)$.}
We observe that this complexity is essentially tight, even for non-tolerant algorithms
\changed{(i.e., that distinguish between the case that $G$ \emph{has} arboricity at most $\alpha$, and the case that $G$ is far from having arboricity $3\alpha$)}.
To be precise, for constant $\eps$, $\Omega\left(\frac{n}{\sqrt{m}}\right)$ queries are necessary for any  algorithm that is not provided with any information regarding $m$, or even when it is provided with a constant factor estimate of $m$ (e.g., a factor-2 estimate).
\changed{The lower-bound construction is based on two families of graphs, where graphs in one family have arboricity at most $\alpha$ and graphs in the other family are far from having arboricity at most $3\alpha$. The graphs in the second family have a slightly larger number of edges, where, roughly speaking, these edges are the source of the distance to \tnew{bounded} arboricity, and they belong to a relatively small
``hidden'' subgraph (over $O(\sqrt{m})$ vertices). Other than this small subgraph, graphs in the two families have identical structure.}

If the algorithm is provided with $m$ (or a very precise estimate, i.e., within $(1\pm \eps/c)$ for $c>1$), then \changed{we cannot use this lower-bound construction, as graphs in both families must have the same (or almost the same) number of edges.}
However, in such a case \changed{we can modify the construction (so that graphs in both families have exactly the same number of edges) and obtain a lower bound of
$\Omega\left(\frac{n\alpha}{m}\right)$ (graphs in the two families now differ on subgraphs of size $O(m/\alpha)$).
Furthermore, we show that this number of queries is sufficient (when the algorithm is provided with a precise estimate of $m$).}
\changed{Note that by~\cite{NW1, NW2}, $\alpha \leq \sqrt{m}$, so that
$\frac{n\alpha}{m} \leq \frac{n}{\sqrt{m}}$.}

\paragraph{Bounded-degree graphs.}
Suppose first that we are given an upper bound $d$ on the maximum degree in $G$, and let $\davg = 2m/n$ denote the average degree. Then we can slightly modify the algorithm so that the term $\tilde{O}(n/\sqrt{m})$ in the complexity of the algorithm is replaced by $d/\davg$.

The above statement is for the case that  distance to having the property is measured (as defined in the general-graphs model), with respect to $m$ (and we only assume that the algorithm is provided with additional information regarding the maximum degree in the graph). If we consider the bounded-degree model~\cite{GR02}, in which not only do we get $d$ as input, but in addition distance is measured with respect to $d\cdot n$ (which is an upper bound on $m$), then our algorithm can be slightly modified so that its complexity depends {\em only on $1/\eps$\/} (and the dependence is quasi-polynomial).



\paragraph{Access to random edges.}
	In the case that the algorithm is given access to uniformly (or almost uniformly) distributed random edges, it can again be slightly modified to  run in time quasi-polynomial in $1/\eps$.
	\changed{This is true since the term $\tilde{O}\left(\frac{n}{\eps \sqrt{m}}\right)$
	in the running time arises from sampling edges almost uniformly at random when the number of edges is unknown.}

\paragraph{Expected complexity.}
The reason that the query complexity and running time are in expectation is due to the need of the edge sampling procedure to obtain an estimate for the number of edges. If such a (constant factor) estimate is provided to the algorithm, then the \changed{upper} bound on the complexity of the algorithm
\changed{always holds}.

\paragraph{$\alpha$ vs. $3\alpha$.}
Our algorithm distinguishes between the case that the graph is close to having arboricity at most $\alpha$ and the case that it is far from having arboricity at most $3\alpha$. The constant $3$ can be reduced to $2+\eta$
    at a cost that depends (exponentially) on $1/\eta$, but we do not know how to avoid this cost and possibly go below a factor of $2$. However, in some cases this constant may  not be significant.
    For example, suppose we want to know whether, after removing a small fraction of the edges, we can obtain a graph $G'$    with bounded arboricity so that we can run an optimization algorithm on $G'$
    (or possibly on $G$ itself),
    whose complexity depends polynomially on the arboricity of $G'$.
    In such a case, the difference between $\alpha$ and $3\alpha$ is inconsequential.

\paragraph{Two-sided error vs. one-sided error.}
Our algorithm has two-sided error, and we observe that \changed{for $\alpha\geq 2$} every one-sided error algorithm must perform $\Omega(n)$ queries. We note that for $\alpha=1$, there exists a one-sided error  algorithm for testing cycle-freeness~\cite{CGRSSS} that performs  $\tilde{O}(\sqrt n)$ queries (and these many queries are necessary~\cite{GR02}).

\paragraph{Dependence on $\eps$.}
In the second term of the complexity of our algorithm there is a quasi-polynomial dependence on $1/\eps$. It is
an open problem whether this dependence can be reduced to polynomial. 

\tnew{
	\renewcommand{\arraystretch}{1.8}

	\begin{table}
		\begin{tabular}{ | m{5.5cm} | m{5.5	cm} | m {2 cm} |}
			\hline
			\multicolumn{3}{|c|}{ \vspace{.05ex} \large\bf Summary of our result and variants }  \\
			\hline\hline\vspace{.000ex}
			\tnew{General-graphs model, two-sided error, unknown $m$, distance w.r.t. $m$} & $\comp$  & Theorem~\ref{thm:test-arb} \\ \cline{2-3}
 					   & $\Omega\left(\frac{n}{\sqrt{\eps m}}\right)$ \old{(const. $\eps$)} & Claim~\ref{clm:lb1}
 			\\ \hline
			Known $m$ (or $(1\pm \eps/c)$-estimate)   &    ${O}\left(\frac{n\alpha}{\eps^3 m}\right) +
			\left(\frac{1}{\eps}\right)^{O(\log(1/\eps))} $ & Theorem~\ref{thm:test-given-m} \\
			 & $\Omega\left(\frac{n\alpha}{\eps m}\right)$ & Claim~\ref{clm:lb2} \\
			 \hline
			Upper bound $d$ on max-degree &   $O\left(\frac{d}{\eps \bar{d}}\right)
    \danachange{+ \left(\frac{1}{\eps}\right)^{O(\log(1/\eps))}}$ & Section~\ref{var:bounded-deg} \\
			\hline
			Upper bound $d$ on max-degree and distance w.r.t. $n\cdot d$ & $(\frac{1}{\eps})^{O(\log(1/\eps))}$ & Section~\ref{var:bounded-dist} \\ \hline
			Access to random edges &  $ (\frac{1}{\eps})^{O(\log(1/\eps))}$  & Section~\ref{var:edge-queries} \\ \hline
				 One-sided error, \danachange{$\alpha\geq 2$} &  $\Omega(n)$ & Claim~\ref{clm:one-sided} \\ \hline
		\end{tabular}
		\caption{
			\tnew{Unless explicitly specified otherwise, all results refer to two-sided error algorithms in the general-graphs model (so that the distance is with respect to the number of edges $m$, and $m$ is unknown).
For each variant we specify the difference(s) in terms of the model (or error type).}}
		\label{tab:variants}
	\end{table}
}

\subsection{The algorithm}\label{subsec:techniques}

\paragraph{Challenges of designing algorithms in the general-graphs models \danachange{and our approach}}
	The general-graphs model presents 
\danachange{several} challenges. Indeed, 
\danachange{relatively} few 
\danachange{properties}
were studied in this model.  
\danachange{These include} 
bounded diameter and connectivity~\cite{PR},  bipartiteness~\cite{KKR04} (\cite{CMOS11} for planar graphs), triangle-freeness~\cite{AKKR-tri,Gug06,Ras06}, cycle-freeness~\cite{CGRSSS} (with one-sided error)
and $k$-path-freeness~\cite{IY14} 
(when random edge queries are also allowed).
	Since this model allows for unbounded degrees, it could be the case
that the large distance to having the property is due to a relatively dense subgraph that resides on a small set of vertices. As a consequence, observing  this subgraph might be costly.
\danachange{Furthermore,} since the degrees are unbounded, it is not clear how to efficiently explore the graph.
	Previous techniques in this model include bounded-size BFS~\cite{PR} (that is, performing a BFS until a predetermined fixed number of vertices is discovered), random walks~\cite{CMOS11,IY14} and analysis based on color coding~\cite{IY14}.
	
\danachange{We apply a different} technique to meet the above challenges.
In our analysis, we characterize a ``special'' set of edges, \danachange{$S$}, such that  if the graph is $\eps$-close to having bounded arboricity, then 
\danachange{$|S|$} is small, and if the graph is far from having bounded arboricity, then 
\danachange{$|S|$}  is large.
\danachange{The question now is how to decide whether an edge belongs to $S$ or not.}
\danachange{We show} that given an edge  in the graph, we can
\danachange{perform a certain ``approximate decision'' regarding membership in $S$, which suffices for our purposes}.
In order to do so, we use a
	 procedure that recursively samples a subset of the neighbors of 
\danachange{a given} vertex, until it reaches the maximal depth of the recursion. At this point it returns a deterministic answer based on the degree of the vertex, which propagates up the recursion tree. The resulting queries of this process can also be viewed as a randomized BFS of bounded degree and depth.
\danachange{We next provide some more details on the algorithm.}

\paragraph{The algorithm.}
Our starting point is a simple (non-sublinear and deterministic) algorithm that is similar to the distributed forest decomposition algorithm of Barenboim and Elkin~\cite{BE}. This algorithm works in $\ell = O(\log(1/\eps))$ iterations, where in each iteration it assigns edges to a subset of the vertices, and the vertices that are assigned edges become ``inactive''.
We show that if the graph has arboricity at most $\alpha$, then, when the algorithm terminates,  the number of edges between remaining active vertices, whose set we denote here by $A_\ell$, is relatively small. On the other hand, if the graph is sufficiently far from having arboricity at most $3\alpha$, then the number of  edges between vertices in $A_\ell$ is relatively large.

Given this statement regarding the number of remaining edges between vertices in $A_\ell$,
our algorithm estimates the number of such remaining edges.
To this end we devise a procedure for deciding whether a given vertex $v$ belongs to $A_\ell$.
This can  be done by emulating the deterministic algorithm on the distance-$\ell$ neighborhood of $v$. However, such an emulation may require a very large number of queries (as the maximum degree in the graph is not necessarily bounded). Instead, we perform a certain approximate randomized emulation of the deterministic algorithm, which is much more query efficient. While this emulation does not exactly answer whether or not $v\in A_\ell$, it gives an approximate answer
that suffices for our purposes (see Lemma~\ref{lem:is-active} for the precise statement).

The high-level idea is that given a vertex $v$, we select a random subset of its neighbors. We then recursively run the  procedure to decide for each of these neighbors whether it belongs to $A_{\ell-1}$ (the vertices that remain active after $\ell-1$ iterations). This recursive process defines a (random) tree with $\ell$ levels. For each vertex in the tree we decide if it is active or not according to the fraction of active children it has in the tree, and for the leaf vertices we simply decide according to their degree.

When analyzing the correctness of this procedure for a vertex $v$ we need to take into account two sources of error.
 The first is due to a possible bias in the selection of its sample of neighbors. That is, even if we had an oracle that always answered correctly for a vertex $u$ whether it belongs to $A_{\ell-1}$, we might still err in our decision regarding whether or not $v$ belongs to $A_{\ell}$. The second source of error is due to incorrect answers on $v$'s neighbors. In other words, we need to analyze how errors propagate (and accumulate) up the recursion tree.

 In addition, we  upper bound the total size of the tree (which determines the query complexity and running time of the procedure).

\subsection{Related Work}\label{subsec:related}
In what follows, when we say ``testing'' we mean ``property testing'', as defined earlier. That is, distinguishing between objects that have a property and objects that are far from having the property.

Most of the 
property testing results related to this work
are in the bounded-degree model~\cite{GR02}. Recall that in this model
the algorithm has the same query access to the graph as we consider, but it is also given an upper bound, $d$,
on the maximum degree in the graph, and distance is measured with respect to $d\cdot n$ (rather than the actual number of edges, $m$), so that it is less stringent. As noted in
Section~\ref{subsec:discuss},
an adaptation of our algorithm to the (``easier'') bounded-degree model achieves complexity
that is quasi-polynomial in $1/\eps$ (and independent of $n$).
As we discuss next, in the bounded-degree model there are several results on testing
whether a graph excludes  specific fixed minors as well as results on testing minor-closed properties in general.
In what follows we assume that $d$ is a constant, since in some of these works this assumption is made
(so that no explicit dependence on $d$ is stated).

Goldreich and Ron~\cite{GR02} provide an algorithm for testing if a graph is cycle-free, namely, excludes  $C_3$-minors,
where the complexity of the algorithm is $O(1/\eps^{3})$.
Yoshida and Ito~\cite{YI15} test outerplanarity (excluding $K_4$-minors and $K_{2,3}$-minors) and if a graph is a cactus (excluding a diamond-minor) in time that is polynomial in $1/\eps$.
Benjamini, Schramm, and Shapira~\cite{BSS08} showed that any minor-closed property can be tested in time that
depends only on $1/\eps$ (where the dependence may be triply-exponential).
Hassidim et al.~\cite{HKNO09} introduced a general tool, a partition oracle, for locally partitioning  graphs
 that belong to certain families of graphs, into small parts with relatively few edges between the parts.
A partition oracle for a family of graphs implies a corresponding (two-sided error) tester for membership in this family.
Hassidim et al.~\cite{HKNO09} designed partition oracles for hyperfinite classes of graphs and minor-closed classes of graphs.
One of the implications of their work is improving the running time of testing minor-closed properties from triply-exponential in $\poly(1/\eps)$ to singly exponential in $\poly(1/\eps)$.
Levi and Ron later improved the running time of the partition oracle for minor-closed classes of graphs to quasi-polynomial in $1/\eps$~\cite{LR15}.
Edelman et al. designed a partition oracle for graphs with bounded treewidth~\cite{EHNO11} whose query and time complexity are polynomial in $1/\eps$.
Newman and Sohler~\cite{NS13} extended the result of~\cite{HKNO09} and showed that every hyperfinite property
(i.e., property of hyperfinite graphs) is testable in  time that is independent of the size of the graph.

All the aforementioned testing algorithms have two-sided error (and this is also true of our algorithm). Czumaj et al.~\cite{CGRSSS} study
the problem of one-sided error testing of $C_k$-minor freeness and tree-minor freeness. For cycle-freeness ($C_3$-minor freeness)
 they give a one-sided error testing algorithm whose complexity is $\tilde{O}(\sqrt{n}\cdot \poly(1/\eps))$ (for $k>3$ there is an exponential dependence on $k$). They show that the dependence on $\sqrt{n}$ is tight for any minor that contains a cycle.
 On the other hand, for tree-minors they give an algorithm whose complexity is $\exp((1/\eps)^{O(k)})$,
 where $k$ is the size of the tree (so that the complexity is independent of $n$).

 Finally we discuss 
 results in the 
 \danachange{general-graphs} model 
 \danachange{that are  related to our result}.
  Czumaj et al.~\cite{CGRSSS} show that their result for cycle-freeness extends to the 
  \danachange{general-graphs} model,
 where the complexity of the algorithm is  $\tilde{O}(\sqrt{n}\cdot \poly(1/\eps))$.
 Iwama and Yoshida~\cite{IY14} consider an augmented model that allows random edge sampling. In this augmented model they provide several testers for parameterized properties including $k$-path freeness whose complexity is independent of the size of the graph.

\subsection{Organization}
Following some basic preliminaries in Section~\ref{sec:prel}, we give the aforementioned ``Edge-assignment algorithm'' in Section~\ref{sec:det-assign}.
In Section~\ref{sec:test-arb} we present our testing algorithm. The lower bounds mentioned in 
Section~\ref{subsec:discuss}
are provided in Section~\ref{sec:lower-bound} and the variants of our algorithm (e.g., in the bounded-degree model)
appear in Section~\ref{sec:variations}. A procedure for estimating what we refer to as the \emph{$\eps$-corrected arboricity} of a given graph (see Definition~\ref{def:corrected}) appears in Section~\ref{sec:corrected}.
In the appendix we describe an improved variant of our algorithm for the case that a precise estimate of the number of
edges is given to the algorithm.

\section{Preliminaries}\label{sec:prel}

For an integer $k$, let $[k]\eqdef \{1,\dots,k\}$.
For an undirected simple graph $G = (V,E)$ let $n = |V|$ and $m=|E|$. For each vertex $v\in V$, let
$d(v)$ denote its degree.

We assume there is query access to the graph in the form of {\em degree\/} queries and {\em neighbor\/} queries.
That is, for any vertex $v\in V$, it is possible to perform a query to obtain $d(v)$, and for any $v$ and $i \in [d(v)]$, it is possible to perform a query to obtain the $i^{\rm th}$ neighbor of $v$ (where the order over neighbors is arbitrary).
If $i> d(v)$, then a special symbol is returned.

\begin{definition}[Distance]
For a property ${\mathcal P}$ of graphs, and a parameter $\eps \in [0,1]$, we say that a graph $G$ is {\em $\eps$-far\/}
from (having) the property ${\mathcal P}$, if more than $\eps\cdot m$ edge modifications on $G$ are required so as to obtain
a graph that has the property ${\mathcal P}$.
\end{definition}

\begin{definition}[Arboricity]
The {\sf arboricity} of a graph $G = (V,E)$ is the minimum number of forests into which its edges can be partitioned.
We denote the arboricity of $G$ by $\alpha(G)$.
\end{definition}
By the work of Nash-Williams~\cite{NW1, NW2}, for every graph $G = (V,E)$,
\ifnum\conf=0
\begin{equation}\label{eq:Nash-Williams}
\alpha(G) \;= \; \max_{S\subseteq V} \left\lceil \frac{|E(S)|}{|S|-1} \right\rceil\;,
\end{equation}
\else
$\alpha(G) \;\leq \; \max_{S\subseteq V} \left\lceil \frac{|E(S)|}{|S|-1} \right\rceil$,
\fi
where $E(S)$ denotes the set of edges in the subgraph induced by $S$.

Let $\exp(x) \eqdef e^{x}$, and for a random variable $\chi$, we use $\EX[\chi]$ to denote its expected value.

\ifnum\conf=0
We make use of Hoeffding's inequality \cite{hoeffding}, stated next.
For $i=1, \ldots, s$, let $\chi_i$ be a $0/1$-valued random variable, such that $\Pr[\chi_i=1] = \mu$.
Then for any $\gamma \in (0,1]$,
\[ \Pr\left[\frac{1}{s}\sum\limits_{i=1}^s \chi_i > \mu+\gamma \right] < \exp\left( -2\gamma^2 s\right) \]
and
\[
			\Pr\left[\frac{1}{s}\sum\limits_{i=1}^s \chi_i < \mu-\gamma \right] < \exp\left( -2\gamma^2 s\right)\;.
\]

We also apply the following version of  the multiplicative Chernoff bound \cite{chernoff}.
For $i=1, \ldots, s$, let $\chi_i$ be a random variables taking values in $[0,B]$, such that  $\EX[\chi_i]=\mu$. Then
for any $\gamma \in (0,1]$,
\[ \Pr\left[\frac{1}{s}\sum\limits_{i=1}^s \chi_i > (1+ \gamma)\mu \right] < \exp\left( -\frac{\gamma^2 \mu s}{3B}\right)
\]
and
\[\Pr\left[\frac{1}{s}\sum\limits_{i=1}^s \chi_i < (1- \gamma)\mu \right] < \exp\left( -\frac{\gamma^2 \mu s}{2B}\right)\;.\]
\fi

\section{A Deterministic Edge-Assignment Algorithm}	\label{sec:det-assign}

In this section we describe a deterministic  algorithm that, given as input a graph $G=(V,E)$,  assigns edges to vertices.
The algorithm works iteratively, where in each iteration it assigns edges to a new subset of vertices.
The algorithm is provided with parameters that determine an upper bound on the number of edges that are assigned to each vertex (where an edge may be assigned to both of its endpoints). \changed{The number of edges assigned to each vertex is at most $3\alpha$ plus a small fraction of its original degree. This fraction is determined by one of the parameters ($\gamma$).} When the algorithm terminates, some edges may remain unassigned (and some vertices may not have been assigned any edges).
This algorithm (when viewed as a distributed algorithm) is a variant of the algorithm by Barenboim and Elkin~\cite{BE} for finding a forest decomposition in graphs with bounded arboricity.

\ifnum\TALG=0
\confFigure{
	{\bf Assign-Edges$(G,\alpha, \eps, \gamma)$} \label{Ae}
	\smallskip
	\begin{compactenum}
		\item $G_0(\gamma) = G$, $A_0(\gamma) = V$.
		\item For $i=1$ to  $\ell\eqdef                           \lceil\log_{6/5}(1/\eps)\rceil$ do:
		\begin{compactenum}
			\item Let $B_i(\gamma)$ be the set of vertices $v\in V$ whose  degree in $G_{i-1}(\gamma)$ is
			at most $3\alpha + \gamma\cdot d(v)$.
			\item Assign
			each vertex $v \in B_i(\gamma)$ the edges incident to it in $G_{i-1}(\gamma)$.
			\item Let $A_i(\gamma) = A_{i-1}(\gamma)\setminus B_i(\gamma)$, and let $G_i(\gamma)$ be the graph induced by $A_i(\gamma)$.
		\end{compactenum}
	\end{compactenum}
}
\else
\begin{algorithm}
	\caption{Assign-Edges$(G,\alpha, \eps, \gamma)$}
	\label{Ae}
\begin{algorithmic}[1]
	
		\STATE  $G_0(\gamma) = G$, $A_0(\gamma) = V$.
		\FOR{$i=1$ to $\ell\eqdef \lceil\log_{6/5}(1/\eps)\rceil$}
			\STATE Let $B_i(\gamma)$ be the set of vertices $v\in V$ whose  degree in $G_{i-1}(\gamma)$ is
			at most $3\alpha + \gamma\cdot d(v)$.
			\STATE Assign
			each vertex $v \in B_i(\gamma)$ the edges incident to it in $G_{i-1}(\gamma)$.
			\STATE Let $A_i(\gamma) = A_{i-1}(\gamma)\setminus B_i(\gamma)$, and let $G_i(\gamma)$ be the graph induced by $A_i(\gamma)$.
		\ENDFOR
\end{algorithmic}
\end{algorithm}
\fi

The algorithm \AssignEdges\ is provided with
3 parameters: $\alpha$, $\eps$ and $\gamma$.
It might be useful to first consider its execution with  $\gamma=0$.
The role of
$\gamma$
will become clear subsequently (when we describe our testing algorithm
and its relation to \AssignEdges).
In the case of $\gamma=0$, in every iteration, each vertex with degree at most $3\alpha$ in the current graph is assigned all its incident edges in this graph. The initial graph is $G$, and at the end of  iteration $i$, the vertices that are assigned edges, denoted $B_i(\gamma)$ in the algorithm\footnote{While $B_i(\gamma)$ depends also on $\alpha$, we shall want to refer to these sets when the algorithm is invoked with
	the same value of $\alpha$ but with different values of $\gamma$. Hence, only $\gamma$ appears explicitly in the notation.}, together with the edges assigned to them, are removed from the graph.
Once vertices are assigned edges, we view them as becoming {\em inactive\/}.
We use the notation $A_i(\gamma)$ for the vertices that are still {\em \act} 
at the end of iteration $i$.
Observe that by the definition of the algorithm, for $\gamma_1 \leq \gamma_2$, we have that
$A_i(\gamma_2) \subseteq A_i(\gamma_1)$ for every iteration $i$, and hence $G_i(\gamma_2)$ is a subgraph of $G_i(\gamma_1)$.

In the next
\changed{two lemmas}
we upper bound the number of edges in $G_\ell(0)$ when $G$
is close to having arboricity $\alpha$, and we lower bound the number of edges in $G_\ell(\gamma)$
(which is a subgraph of $G_\ell(0)$), when $G$ is far from having arboricity $3\alpha$.\footnote{The term $3\alpha$ can be improved to $(2+\eta)\alpha$ for any $\eta > 0$ by increasing the number of iterations by a factor of $1/\eta$.}

\begin{lemma}\label{lem:det-assign-ub}
If $G$ is $\eps$-close to having arboricity at most $\alpha$, then
$|E(G_\ell(0))| \leq 5\eps m$.
\end{lemma}

\begin{proof}
By the premise of the lemma, $G$ is $\eps$-close to having arboricity at most $\alpha$.
This implies that for each of its subgraphs, $G'$, there is a subset of at most $\eps m$ edges whose removal makes $G'$ have arboricity at most $\alpha$. In particular this is true for the subgraphs
$G_i(0)$ defined by the algorithm, for $i= 1,\dots,\ell$. Denoting by $m_i$  the number of edges in $G_i(0)$, we have
that for every $i \in [\ell]$,
\begin{equation}\label{eq:mi-ub}
m_i \;\leq\; \alpha \cdot|A_i(0)|+\eps m\;.
\end{equation}
By the definition of $A_i(0)$, each vertex $v\in A_i(0)$ has degree greater than
$3\alpha$ in $G_{i-1}(0)$. It follows that
\begin{equation}\label{eq:mi-one-lb}
m_{i-1}\geq 3\alpha |A_i(0)|/2\;.
\end{equation}
Suppose that $|A_i(0)|>  4\eps m/\alpha$ (so that $\eps m < \alpha|A_i(0)|/4$).
The upper bound on $m_i$ in Equation~(\ref{eq:mi-ub}) implies that $m_i \leq 5\alpha|A_i(0)|/4$.
combining this  with the lower bound on $m_{i-1}$ in Equation~(\ref{eq:mi-one-lb})
we get that
$\frac{m_i}{m_{i-1}}\leq 5/6.$

\danachange{Therefore}, in every iteration of \AssignEdges\ in which $|A_i(0)| > 4\eps m /\alpha$, the number of edges in the graph decreases by a multiplicative factor of $5/6$. On the other hand, if
$|A_i(0)|\leq 4\eps m /\alpha$, then,
\danachange{applying Equation~(\ref{eq:mi-ub}) with this upper bound on $|A_i(0)|$, we get that $m_i  \leq 5\eps m$.}
Hence, after at most
$\lceil\log_{6/5}(1/\eps)\rceil$  iterations, there are at most $5\eps m$ edges between active vertices.
\end{proof}

\begin{lemma}\label{lem:det-assign-lb}
If $G$ is $\dist$-far 
 from having arboricity $3\alpha$, then
 $|E(G_\ell(\gamma))| > (\dist\ -2\gamma) m$.
\end{lemma}

\begin{proof}
Assume, contrary to the claim, that
$|E(G_\ell(\gamma))| \leq (\dist -2\gamma)m$. We shall show that by removing at most $\dist\cdot m$
edges from $G$ we can obtain a graph that has arboricity at most $3\alpha$, thus reaching a contradiction
\changed{to the premise of the lemma that $G$ is $\dist$-far
 from having arboricity $3\alpha$}.
First we remove all edges in $G_\ell(\gamma)$, that is, all edges in which both endpoints belong to $A_\ell(\gamma)$.
We are left with edges that are incident to vertices in the set $V\setminus A_\ell(\gamma)$.
For each vertex $v\in V\setminus A_\ell(\gamma)$, let $a(v)$ be the number of edges it is assigned (by Algorithm~\ref{Ae}),
and recall that $a(v) \leq 3\alpha + \gamma d(v)$.
For each vertex $v$ such that $a(v) > 3\alpha$, we remove $a(v) - 3\alpha$ of the edges it is assigned (these edges can be selected arbitrarily), thus leaving it with at most $3\alpha$ assigned edges (recall that some edges may be assigned to both their endpoints).
\changed{Let $E_R$ denote the subset of edges that were removed. We have:
\begin{equation}\label{eq:removed-edges}
|E_R| \leq
|E(G_\ell(\gamma))| + \sum_{\substack{v\in V\setminus A_\ell(\gamma): \\ a(v)>3\alpha}} (a(v)-3\alpha)
 \leq
(\dist -2\gamma)m + \sum_{v\in V} \gamma d(v) = \dist \cdot m\;.
\end{equation}

 It remains to show that the set of edges that were not removed, i.e., $E\setminus E_R$,
 can be decomposed into at most $3\alpha$ forests.
 This is argued in a manner similar to~\cite{BE}. Namely, we define an acyclic orientation, where for each vertex $u$, the number of edges in $E\setminus E_R$ that are oriented from $u$ to another vertex $v$, is at most $3\alpha$ (thus defining the at most $3\alpha$ forests).  Consider an edge $(u,v)$ such that $u \in B_i$ and $v \in B_{i'}$. If $i\neq i'$, then we orient  $(u,v)$ from  the vertex that became inactive first to the vertex that became inactive second. That is, if (without loss of generality) $i<i'$, then $(u,v)$ is oriented from $u$ to $v$. If both vertices became inactive at the same iteration, that is  $i=i'$,
then we orient $(u,v)$  from the vertex with the smaller id to the vertex with the larger id.
Observe that by the definition of the orientation, for every vertex $u$, the set of edges oriented out of $u$ is a subset of the edges assigned to $u$. Furthermore, the orientation is acyclic.
Since after the removal of $E_R$, each vertex is left with at most $3\alpha$ assigned edges, we obtain an orientation as desired. Combing this with Equation~\eqref{eq:removed-edges} we reach a contradiction.
}
\end{proof}

\section{The Testing Algorithm}\label{sec:test-arb}
In this section we present and analyze our algorithm \Iba.
We assume that the distance parameter, $\eps$, is at most $1/20$
(since otherwise the algorithm can simply accept, as it is required to reject graphs that are $20\eps$-far from having arboricity at most $3\alpha$).

We start with a central procedure used by \Iba.


\subsection{Deciding whether a vertex is active}\label{sub:isact}
\label{sec:is-active}
In this subsection we present a procedure that, roughly speaking, decides whether a given vertex $v$ belongs to
the set of active vertices
$A_i(0)$ (as defined in the algorithm \AssignEdges\ from Section~\ref{sec:det-assign}). This procedure is then used
to  estimate the number of edges remaining in $G_\ell(0)$ (the subgraph induced by $A_\ell(0)$).

Observe that by the description of the algorithm \AssignEdges, for any vertex $v$, 
the decision whether $v \in A_i(0)$ can be made by considering the distance-$i$ neighborhood of $v$.
However, the size of this neighborhood may be very large, since the maximum degree in the graph is not bounded.
Hence, rather than querying for the entire distance-$i$ neighborhood, we query (in a randomized manner),
for only a small part of the neighborhood, as detailed in the procedure \IsActive.
As stated in Lemma~\ref{lem:is-active}, the procedure ensures  (with high probability), that its output is correct on $v\in A_\ell(\gamma) \subseteq A_\ell(0)$ and on $v \notin A_\ell(0)$. If $v \in A_\ell(0)\setminus A_\ell(\gamma)$, then the procedure may return any output, and as we shall see subsequently, this suffices for our purposes.

\ifnum\TALG=0
\confFigure{
			{\bf \IsActive$(v,i,\alpha, \gamma, \delta)$} \label{IsAct}
			\smallskip
			\begin{compactenum}
				\item If $d(v) \leq 3\alpha$, then return \NO.\label{Step:isa00}
				\item If $i=1$ and $d(v) > 3\alpha$, then return \YES.\label{Step:isa0}
				\item Sample a random multiset, $S_{v, i}$, of
  $t = 2\log(1/\delta)/\gamma^2$
  neighbors of $v$. \label{Step:isa1}
				\item For every $u \in S_{v ,i}$, invoke \IsActive$(u,i-1,\alpha, \gamma, \delta')$, where $\delta' = \delta/(2t)$ and let
 $\eta(v,i)$ be the fraction of vertices in $S_{v, i}$ that returned \YES. \label{Step:isa2}
				\item If $\eta(v,i) \cdot d(v) > 3\alpha + (\gamma/2)\cdot d(v)$, then return \YES, otherwise return \NO. \label{Step:isa3}
			\end{compactenum}
}
\fi

\floatname{algorithm}{Procedure}
\begin{algorithm}
 	\caption{\IsActive$(v,\ell,\alpha, \gamma, \delta)$} \label{IsAct}
	\begin{algorithmic}[1]
	\STATE Set the confidence parameter $\rho = \left(\frac{\delta \cdot \gamma}{\ell}\right)^{4\ell}$.
    \STATE Return \RecursiveIsActive$(v,\ell,\alpha, \gamma, \rho)$.
	\end{algorithmic}
 \end{algorithm}
\floatname{algorithm}{Algorithm}

\floatname{algorithm}{Procedure}
\begin{algorithm}
 	\caption{\RecursiveIsActive$(v,i,\alpha, \gamma, \rho)$} \label{RecIsAct}
 	\begin{algorithmic}[1]
	\STATE If $d(v) \leq 3\alpha$, then return \NO.\label{Step:isa00}
	\STATE If $i=1$ and $d(v) > 3\alpha$, then return \YES.\label{Step:isa0}
	\STATE Sample a random multiset, $S_{v, i}$, of
	$t = \left\lceil \frac{4\ell \log(1/\rho)}{\gamma^2} \right\rceil$
	neighbors of $v$. \label{Step:isa1}
	\STATE For every $u \in S_{v ,i}$, invoke \RecursiveIsActive$(u,i-1,\alpha, \gamma, \rho)$ 
and let
	$\eta(v,i)$ be the fraction of vertices in $S_{v, i}$ that returned \YES. \label{Step:isa2}
	\STATE If $\eta(v,i) \cdot d(v) > 3\alpha + (\gamma/2)\cdot d(v)$, then return \YES, otherwise return \NO. \label{Step:isa3}
	\end{algorithmic}
 \end{algorithm}
\floatname{algorithm}{Algorithm}


\begin{lemma}\label{lem:is-active}
For $\delta<\frac{1}{3}$ and $\gamma<1$,
the procedure \IsActive$(v,\ell,\alpha,\gamma,\delta)$ returns a value in $\{\rm \YES, \NO\}$ such that the following holds.
\begin{enumerate}
\item If  $v\notin \Ualpha$, then the procedure returns {\rm \NO} with probability at least $1-\delta$. \label{proc_peeled}
\item If $v\in A_{\ell}(\gamma)$, then the procedure returns {\rm \YES} with probability at least $1-\delta$.
\end{enumerate}
The query complexity and running time of  \IsActive$(v,\ell,\alpha,\gamma,\delta)$  are 
$O\left( \left( \frac{6\ell \cdot \log\left(\frac{\ell}{\gamma \cdot \delta }\right) }{\gamma^2} \right)^{\ell} \right)$.

\end{lemma}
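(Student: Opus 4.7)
I would prove both correctness claims simultaneously by induction on the recursion depth $i$, strengthening the statement to hold for every $i \in [\ell]$: IsActive$(\alpha,\gamma,\delta,v,i)$ returns NO with probability at least $1-\delta$ whenever $v \notin A_i(0)$, and returns YES with probability at least $1-\delta$ whenever $v \in A_i(2\gamma)$. The base case $i=1$ is immediate from Steps~\ref{Step:isa00} and~\ref{Step:isa0}: if $d(v) \le 3\alpha$ then $v \in B_1(0)\cap B_1(2\gamma)$ and the procedure correctly returns NO; if $d(v) > 3\alpha$, then the ``$v \notin A_1(0)$'' case is vacuous and the ``$v \in A_1(2\gamma)$'' case is handled by returning YES.

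For the inductive step, the structural observation driving the argument is nestedness of the active sets. If $v \notin A_i(0)$ then $v$ was peeled in some iteration $j \le i$, so $|N(v)\cap A_{j-1}(0)| \le 3\alpha$, and since $A_{i-1}(0)\subseteq A_{j-1}(0)$ we obtain $p^+(v) \eqdef |N(v)\cap A_{i-1}(0)|/d(v) \le 3\alpha/d(v)$. Dually, if $v\in A_i(2\gamma)$ then $p^-(v) \eqdef |N(v)\cap A_{i-1}(2\gamma)|/d(v) > 3\alpha/d(v) + 2\gamma$. Now condition on the event $\mathcal{E}$ that all $t$ recursive calls with parameter $\delta' = \delta/(2t)$ are ``correct'' in the inductive sense; by a union bound $\Pr[\mathcal{E}] \ge 1 - t\delta' = 1-\delta/2$. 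On $\mathcal{E}$, every sampled $u \notin A_{i-1}(0)$ contributes $0$ to $\eta(v,i)$, every $u\in A_{i-1}(2\gamma)$ contributes $1$, and neighbors in the ``gap'' $A_{i-1}(0)\setminus A_{i-1}(2\gamma)$ are unconstrained, giving the sandwich $\hat p^-(v) \le \eta(v,i) \le \hat p^+(v)$ where $\hat p^{\pm}$ are empirical sampling fractions. With $t = \log(1/\delta)/\gamma^2$, Hoeffding with margin $\gamma$ gives deviation probability $\exp(-2\gamma^2 t) = \delta^2$. In the first case, $\eta(v,i)\le \hat p^+(v) \le p^+(v) + \gamma \le 3\alpha/d(v)+\gamma$ with probability $\ge 1-\delta^2$, so Step~\ref{Step:isa3} returns NO; in the second, $\eta(v,i)\ge \hat p^-(v) \ge p^-(v)-\gamma > 3\alpha/d(v)+\gamma$ and the procedure returns YES. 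Total failure probability is $\le \delta/2 + \delta^2 \le \delta$ for $\delta \le 1/2$.

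For the complexity, the recursion $T_i \le t_i \cdot T_{i-1} + O(1)$ with $t_i = \log(1/\delta_i)/\gamma^2$ and $\delta_{i-1} = \delta_i/(2t_i)$ gives $\log(1/\delta_{i-1}) = \log(1/\delta_i) + O(\log(\log(1/\delta_i)/\gamma))$; unrolling over $\ell$ levels shows that every $t_i$ is bounded by $O\!\left(\frac{\ell \log(\ell/(\gamma\delta))}{\gamma^2}\right)$, and multiplying these over the $\ell$ levels of the tree of recursive calls yields the claimed bound. The one step I expect to be the main obstacle is pinning down the right parameterization so that the two halves of the argument match up: the recursive calls guarantee correctness only for vertices outside the gap $A_{i-1}(0)\setminus A_{i-1}(2\gamma)$, so the Hoeffding slack $\gamma$ must fit within the gap $2\gamma$ between the decision thresholds for $A_i(0)$ and $A_i(2\gamma)$, while simultaneously being small enough that $t = \log(1/\delta)/\gamma^2$ samples absorb the union bound across levels without inflating $\delta_1$ too much. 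Verifying that the $\delta_i$ schedule is self-consistent and that the resulting tree size matches $(O(\ell\log(\ell/(\gamma\delta))/\gamma^2))^\ell$ is the most delicate computation.
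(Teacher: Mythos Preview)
Your proof is correct and takes essentially the same approach as the paper: both combine the nesting $A_{i-1}(2\gamma)\subseteq A_{i-1}(0)$ with a per-level Hoeffding bound (slack $\gamma$) and a union bound over the recursive calls via the shrinking $\delta_i$ schedule, and both unroll the $\delta_i$ recursion to bound the maximal sample size by $O(\ell\log(\ell/(\gamma\delta))/\gamma^2)$. The paper packages the argument through an auxiliary notion of ``recursively $i$-successful'' and then proves the correctness implication by induction, whereas your direct induction on $i$ is a slightly more streamlined presentation of the identical idea.
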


\begin{proof}
For a vertex $v\in V$, consider the execution of \IsActive$(v,\ell,\alpha, \gamma, \delta)$.

For $1 \leq i \leq \ell$,
define $S_i$ to be the multiset of vertices on which \RecursiveIsActive\ is invoked with the parameter $i$.
In particular, $S_\ell = \{v\}$, and for $i < \ell$, the vertices in $S_i$ were selected in invocations
of \RecursiveIsActive\  with $i+1$.
For a vertex $u$, let
$\widehat{\eta}(u, i,\gamma)$ be the fraction of vertices in $S_{u, i}$ that 
\danachange{belong to} $A_{i-1}(\gamma)$,
and let $\widehat{\eta}(u, i,0)$ be the fraction of vertices in $S_{u, i}$ that
\danachange{belong to} $A_{i-1}(0)$ (which is a superset of $A_{i-1}(\gamma)$). 
Recall that $A_0(0) = A_0(\gamma) = V$.

For $2 \leq i \leq \ell$ we say that a vertex $u \in S_i$ is {\em $i$-successful} if one of the following holds:
\begin{enumerate}
	\item $u \in A_{i}(\gamma)$ and    $\widehat{\eta}(u, i,\gamma) \cdot d(u) > 3\alpha + (\gamma/2) d(u)$.
	\item $u \notin A_{i}(0)$ and $\widehat{\eta}(u, i,0) \cdot d(u) \leq 3\alpha + (\gamma/2) d(u)$.
	\item $u \in A_{i}(0)\setminus A_{i}(\gamma)$.
\end{enumerate}
Otherwise, it is {\em $i$-unsuccessful\/}.
For $i=1$, every vertex  is {\em $1$-successful}.

Consider a recursive call to \RecursiveIsActive$(u,i,\alpha, \gamma, \rho)$ on a vertex $u\in A_{i}(\gamma)$
for  $2 \leq i \leq \ell$.
Since $u\in A_{i}(\gamma)$, we have that $\EX[\widehat{\eta}(u,i,\gamma)\cdot d(u)] > 3\alpha+\gamma d(u)$.
By Hoeffding's inequality, the probability that $u$ is $i$-unsuccessful  is upper bounded by
\[\Pr\Big[\widehat{\eta}(u,i,\gamma) \leq \EX[\widehat{\eta}(u,i,\gamma)] -\gamma/2 \Big] < \exp(-2(\gamma/2)^2 \cdot t) \leq \rho/2\;.\]
Now consider a vertex
$u \notin A_{i}(0)$. That is, $u \in \bigcup_{i' \leq i}B_{i'}(0)$ (where $B_{i'}(\cdot)$ is as defined in the algorithm \AssignEdges).
In this case we claim that $\EX[\widehat{\eta}(u,i,0) \cdot d(u)] \leq 3\alpha$.
To verify this claim observe that since $u \in B_{i'}(0)$ (for
some $i' \leq i$), the number of neighbors that $u$ has in
\danachange{$A_{i'-1}(0)$}
is at most $3\alpha$.
The claim follows since
$A_{\danachange{i-1}}(0) \subseteq A_{\danachange{i'-1}}(0)$ (for $i'\leq i$), and by the definition
of $\widehat{\eta}(u,i,0)$ (\danachange{the fraction of vertices in $S_{u,i}$ that belong to $A_{i-1}(0)$}).
By Hoeffding's inequality, the probability that $u$ is $i$-unsuccessful  is upper bounded by
\[\Pr\Big[\widehat{\eta}(u,i,0) > \EX[\widehat{\eta}(u,i,0)] +\gamma/2 \Big] < \exp(-2(\gamma/2)^2 \cdot t) \leq \rho/2\;.\]

For  $2 \leq i \leq \ell$
we say that a vertex $u \in S_i$ is \textit{recursively $i$-successful} if $u$ is $i$-successful  and all the vertices in $S_{u, i}$ are recursively $(i-1)$-successful.
For $i=1$ every vertex is defined to be recursively $1$-successful.
By this definition, if for every $2 \leq i \leq \ell$, every $u \in S_i$ is $i$-successful, then
we also have that for every $2 \leq i \leq \ell$, every $u \in S_i$ is recursively $i$-successful.
By taking a union bound over all $2 \leq i \leq \ell$ and $u \in S_i$, we get that the probability that for some $i$ there exists $u \in S_i$ that is not $i$-successful is at most
\begin{align*}
\sum_{i=2}^\ell t^i \cdot \frac{\rho}{2}  \;& \leq\;
  t^\ell \cdot \rho  \;=\; \left\lceil \frac{4\ell \log(1/\rho)}{\gamma^2} \right\rceil^\ell \cdot \rho\\
 & \leq \; \frac{6^{\ell} \ell^{\ell}\cdot \log^{\ell}\left( \frac{\ell}{\delta \gamma}\right)}{\gamma^{2\ell}} \cdot \frac{\delta^{4\ell}\cdot \gamma^{4\ell}}{\ell^{4\ell}} < \delta,
\end{align*}
where the last inequality is by the assumption that $\delta<\frac{1}{3}$ and $\gamma<1$.

We next claim that if $u\in S_i$ is recursively $i$-successful, then the following holds:
if $u \in A_{i}(\gamma)$, then \RecursiveIsActive$(u,i,\alpha, \gamma, \rho)$ returns {\rm \YES} and  if
$u\notin A_{i}(0)$,
then \RecursiveIsActive$(u,i,\alpha, \gamma,\rho)$ returns {\rm \NO}.
We establish this claim as well by induction on $i$.

For $i=1$ (the leaves of the recursion tree), the claim follows by Steps~\ref{Step:isa00} and~\ref{Step:isa0} of the algorithm.
For the  induction step, assume the claim holds for $i-1\geq 1$, and we prove it for $i$.
If $u \in A_{i}(\gamma)$, then (since $u$ is $i$-successful), $\widehat{\eta}(u, i,\gamma) \cdot d(u) > 3\alpha + (\gamma/2) d(u)$, and since all the vertices in $S_{u, i}$ are recursively $(i-1)$-successful, then by induction, the algorithm returns {\rm \YES} in Step~\ref{Step:isa3}.
If $u\notin A_{i}(0)$,
then (again since $u$ is $i$-successful),
$\danachange{\widehat{\eta}(u,i,0)} \cdot d(u) \leq 3\alpha + (\gamma/2) d(u)$, and since all the vertices in $S_{u,i}$ are recursively $(i-1)$-successful,  by induction the algorithm returns {\rm \NO} in Step~\ref{Step:isa3}.

It remains to bound the complexity of the algorithm. Consider the recursion tree corresponding to
the complete execution of \IsActive$(v,\ell,\alpha,\gamma,\rho)$ for any vertex $v$.
Since the size of the recursion tree is $O(t^\ell) = 
 O\left( \left( \frac{6\ell \cdot \log\left(\ell/(\gamma \delta)\right) }{\gamma^2} \right)^{\ell} \right)$, 
the upper bound on the complexity of the procedure follows.
\end{proof}

\subsection{The algorithm for testing bounded arboricity}\label{subsec:test-alg}
In order to distinguish between the case that $G$ is $\eps$-close to having arboricity at most $\alpha$
and the case that it is $20\eps$-far from having arboricity at most $3\alpha$, our algorithm samples edges and applies the procedure \IsActive\ to their endpoints.
Sampling edges (almost uniformly) is done by making
use of the following theorem.

\begin{theorem}[Eden \& Rosenbaum~\cite{ER17}, rephrased]
	\label{thm:ER17} \label{Saue}
	Let $G = (V, E)$ be a graph with $n$ vertices and $m$ edges. There exists an algorithm named \Seau,
	that is given query access to $G$ and parameters $n$, $\beta$ and $\delta$.
	The algorithm returns an edge $e\in E$ with probability at least $1-\delta$,
\danachange{and conditioned on an edge being returned,}
	each edge in the graph is returned with probability in $\left[\frac{(1-\beta)}{m},\frac{(1+\beta)}{m}\right].$
	The \changed{expected} query complexity and running time of the algorithm  are
	${O}\left(\frac{n}{\sqrt{ \beta m}}\right)\changed{\cdot\log^2(n/\delta)}$.
\end{theorem}
\changed{We note that it} can be shown that the dependence on $\poly(\log n)$ in the complexity of \Seau\ can be reduced to a dependence on $\poly(\log(n/\sqrt {m}))$ (this dependence stems from estimating the average degree up to a constant factor).\footnote{This can be done by applying the algorithm of \cite{ERS16} for estimating the number of edges and slightly modifying their geometric search procedure.}

\ifnum\TALG=0
\confFigure{
			{\bf Is-bounded-arboricity$(G,\alpha,\eps)$} \label{Iba}
			\smallskip
			\begin{compactenum}
           		\item Invoke \Seau$(n,1/4,1/4)$ for $t=\settt$ times, and let $S$ be the (multi-)set of returned edges.
               \item \label{step:s}
               Let $s$ be the number of (not necessarily different) edges in $S$. If $s< \sets$, then return \YES.
        \item Set $\ell = \lceil\log_{6/5}(1/\eps)\rceil$.
		\item For each edge $(u_i, v_i) \in S$ do:
			\begin{compactenum}
				\item\label{step:isactive} Invoke \IsActive$(v_i,\ell,\alpha,\eps,\eps/2)$ and \IsActive$(u_i,\ell,\alpha,\eps,\eps/2)$.
				\item If \IsActive\ returned \YES\ on both invocations, then set $\chi_i=1$. Otherwise, set $\chi_i=0$.
			\end{compactenum}
		\item Set $\chi=\frac{1}{s}\sum_{i=1}^s \chi_i$.
		\item\label{step:output} If $\chi <10\eps $, 
		then return \YES. Otherwise, return \NO.
			\end{compactenum}
}
\else
\begin{algorithm}
\caption{\Iba$(G,\tnew{n},\alpha,\eps)$} \label{Iba}
\begin{algorithmic}[1]
	\STATE Invoke \Seau$(n,1/4,1/4)$ for $t=\settt$ times, and let $S$ be the (multi-)set of returned edges.
\STATE Let $s$ be the number of (not necessarily different) edges in $S$. If $s< \sets$, then return \YES. \label{step:s}
\STATE Set $\ell = \lceil\log_{6/5}(1/\eps)\rceil$.
\FOR{each edge $(u_i, v_i) \in S$}
	\STATE\label{step:isactive} Invoke \IsActive$(v_i,\ell,\alpha,\eps,\eps/2)$ and \IsActive$(u_i,\ell,\alpha,\eps,\eps/2)$.
\STATE If \IsActive\ returned \YES\ on both invocations, then set $\chi_i=1$. Otherwise, set $\chi_i=0$.
\ENDFOR
\STATE Set $\chi=\frac{1}{s}\sum_{i=1}^s \chi_i$.
\STATE \label{step:output} If $\chi <10\eps $, 
then return \YES. Otherwise, return \NO.
\end{algorithmic}
\end{algorithm}
\fi

\begin{theorem}\label{thm:test-arb}
\changed{Let $G$ be a graph over $n$ vertices and $m$ edges.}
If $G$ is $\eps$-close to having arboricity at most  $\alpha$, then {\rm\bf Is-bounded-arboricity}
returns {\rm \YES} with probability at least $2/3$, and if $G$ is
$20\eps$-far
from having arboricity at most $3\alpha$, then {\rm \textbf{Is-bounded-arboricity}} returns {\rm \NO} with probability at least $2/3$.

The query complexity and running time of \Iba\ are
$$\comp$$
in expectation.  
\end{theorem}

\begin{proof}
By Theorem~\ref{Saue}, each invocation of \Seau$(n,1/4,1/4)$ succeeds with probability at least $3/4$. By the multiplicative Chernoff bound and by the setting of $t=\settt$, it follows that with probability at least
$5/6$,
at least $1/2$ of the invocations return an edge. Hence, $s\geq \sets$ with probability at least
$5/6$ and the algorithm continues to the following steps. We henceforth condition on this event.
	
	We say that the procedure \IsActive\ is \emph{correct} when invoked with a vertex $v$
in Step~\ref{step:isactive} of the algorithm
if $v\in \Umax$ and \IsActive\ returns \YES, or if $v\notin \Ualpha$ and \IsActive\ returns \NO.
For a subgraph $G'$ of $G$ we let $m(G')$ denote the number of edges in $G'$.

\medskip	
We first consider the case that $G$ is $\eps$-close to having arboricity at most $\alpha$.
By Lemma~\ref{lem:det-assign-ub}, in this case $m(\rG) \leq 5\eps m$.
For each $i\in [s]$ such that the edge $(u_i, v_i)$ does not belong to $\rG$, it holds that either $u_i$ or $v_i$ is not in $\Ualpha$. Hence, by Lemma~\ref{lem:is-active}, \IsActive\ returns \YES\ on both vertices with probability at most $\eps/2$ (recall that \IsActive\ is called in Step~\ref{step:isactive} with the confidence parameter
$\delta$ set to $\eps/2$).
	For each edge $(u_i, v_i) \in \rG$, we upper bound the probability that \IsActive\ returns \YES\ on both vertices by $1$. By Theorem~\ref{Saue}, when \Seau\ is invoked with parameters $\beta = 1/4$ and $\delta=1/4$, if the algorithm returns an edge, then each edge in the graph is returned with probability in $[(3/4)/m,(5/4)/m]$. Therefore, it holds that
	\[\EX[\chi_i] \leq \frac{\eps}{2} \cdot \frac{(5/4)\cdot m}{m} + \frac{(5/4)\cdot m(\rG)}{m}\leq \frac{5\cdot m(\rG)/4+\eps m}{m}\;.\]
Since $m(\rG)\leq 5\eps m$, we get that
$\EX[\chi_i] \leq 8\eps$ for every $i \in [s]$.

	By the multiplicative Chernoff bound and since  $s\geq\sets$,
	\begin{align*}  \Pr\left[ \frac{1}{s}\sum_{i=1}^s \chi_i > \left(1+\frac{1}{20}\right)\cdot 8\eps \right]
	 \ifnum\soda=1
	\\ <
	\else
	<
	\fi
	 \exp\left(-\frac{(1/20)^2\cdot 8\eps \cdot s}{3}\right) < 1/6 \;. 
	\end{align*}
	It follows that if $G$ is $\eps$-close to having arboricity at most $\alpha$, then
either the algorithm returns \YES\ in Step~\ref{step:s}, or
with probability at least
$5/6$,
 $\chi\leq 9\eps $, which causes the algorithm to return \YES\ in Step~\ref{step:output}.

\medskip	
Now consider the case that $G$ is  $20\eps$-far from having arboricity at most $3\alpha$.
By Lemma~\ref{lem:det-assign-lb}, setting $\eps' = 20\eps$ and $\gamma=\eps$,
in this case we get that
$\danachange{m(\Gmax)}
    > 18\eps m$.	

For each $i\in [s]$ such that the edge
$(u_i, v_i)$ belongs to $\Gmax$, we get that $\chi_i=1$ if the invocations of \IsActive\ on both $u_i$ and $v_i$
return \YES. Since for $(u_i,v_i)$ in $\Gmax$ both
 $u_i$ and $v_i$ belong to $\Umax$, by Lemma~\ref{lem:is-active} and the union bound, \IsActive\ returns \YES\ on both vertices with probability at least $1-\eps$. Hence \changed{(using our assumption that $\eps \leq 1/20$)},
 \[\EX[\chi_i]\geq \frac{(1-\eps)\cdot (3/4)\cdot m(\Gmax)}{m}\geq 12\eps .\]

	By the multiplicative Chernoff bound and since $s \geq \sets$,
	\begin{align*}
\Pr\left[ \frac{1}{s}\sum_{i=1}^s \chi_i < \left(1-\frac{1}{20}\right)\cdot {12\eps} \right]
< \exp\left(-\frac{(1/20)^2\cdot 12 \eps \cdot s}{2}\right) < 1/6.
	\end{align*}
It follow that (conditioned on $s \geq \sets$),
with probability at least $5/6$,
	$\chi\geq 10\eps$, which causes the algorithm to return \NO\
in Step~\ref{step:output}. Since the probability that $s<\sets$ is at most $1/6$,
if $G$ is $20\eps$-far from having arboricity at most $3\alpha$, then the algorithm returns \NO\ with probability at least
\danachange{$2/3$}.

\medskip
It remains to bound the complexity of the algorithm.	
	By Theorem~\ref{Saue}, the $t=\settt$ invocations of \Seau\ with parameters $\beta = 1/4$ and $\delta=1/4$  take
${O}\left(\frac{n\tnew{\cdot \log^2 n}}{\eps\sqrt m}\right)$ time.
	In each step of the for loop there are at most two invocations of the procedure \IsActive\ with parameters $\gamma=\eps$ and $\delta=\eps/2$. By Lemma~\ref{lem:is-active}, the query complexity and running time resulting from each of these invocations are
$O\left(\left(\frac{6\ell\log(\ell/\eps^2)}{\eps^2}\right)^\ell\right)$.
Since $\ell=\lceil \log_{6/5}(1/\eps)\rceil $,
 the total query complexity and running time are
$\comp$, as claimed.
\end{proof}

\strikeout{
\subsection{Estimating the number of edges between high-degree vertices}\label{subsec:high}

In this subsection we provide a procedure for distinguishing between the case that the number of edges between
vertices whose degree is above a certain threshold is relatively small, and the case in which the number of such edges is relatively large.

 \begin{definition}\label{def:high-deg}
 	We say that a vertex $v$ is a \emph{high-degree} vertex if $d(v)>2\alpha/\eps$. We let $H$ denote the set of high-degree vertices in the graph.
 If both endpoints of an edge are high-degree vertices, then we say that the edge is a \emph{high} edge. Otherwise, if either endpoint of the edge is low, we say that the edge is  a \emph{low} edge.
 \end{definition}

%
 \begin{claim}\label{clm:high-deg-edges}
 	If $G$ is $\eps$-close to having arboricity at most $\alpha$, then the
 number of high edges in $G$ is at most $2\eps m$. 
 \end{claim}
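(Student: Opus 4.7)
The plan is to use the $\eps$-closeness assumption to remove a small set of edges and then control the remaining high edges via the arboricity bound and a counting argument on $|H|$.

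First I would fix a witness: since $G$ is $\eps$-close to having arboricity at most $\alpha$, there exists a set $R \subseteq E$ with $|R| \le \eps m$ such that $G' \eqdef G\setminus R$ has arboricity at most $\alpha$. The set of high edges in $G$ is exactly the edge set $E_G(H)$ of the subgraph induced by $H$, so I can split
\[
|E_G(H)| \;\le\; |R| + |E_{G'}(H)|\;,
\]
and the first term is already bounded by $\eps m$.

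Next, the key observation is that $|H|$ must be small. Every vertex in $H$ has degree greater than $2\alpha/\eps$ in $G$, so
\[
2m \;\ge\; \sum_{v\in H} d(v) \;>\; \frac{2\alpha}{\eps}\cdot |H|\;,
\]
which yields $|H| < \eps m/\alpha$. Then I would apply the standard fact that a graph with arboricity at most $\alpha$ on $k$ vertices has at most $\alpha(k-1) \le \alpha k$ edges (each of the $\alpha$ forests contributes at most $k-1$ edges). Applied to the subgraph of $G'$ induced by $H$, whose arboricity is at most $\alpha$ since it is a subgraph of $G'$, this gives $|E_{G'}(H)| \le \alpha|H| \le \eps m$.

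Combining the two bounds yields $|E_G(H)| \le 2\eps m$, as claimed. There is no real obstacle here; the only thing to be a bit careful about is that we must count $|H|$ relative to degrees in $G$ (not $G'$), which is exactly how high-degree is defined in Definition~\ref{def:high-deg}, and that the arboricity of an induced subgraph is monotone, which is immediate from the definition.
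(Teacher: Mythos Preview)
Your proof is correct and follows the same approach as the paper's: bound $|H| < \eps m/\alpha$ by the degree threshold, then use $|E(H)| \le \alpha|H| + \eps m$ (which you derive explicitly via the witness $R$ and the arboricity bound on $G'$). The paper simply states the last inequality in one line without naming the edge set $R$, but the underlying argument is identical.
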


\ifnum\conf=0
 \begin{proof} 	
 	Let $E(H)$ denote the set of high edges.
 	By the definition of high-degree vertices, $|H|<\eps m/\alpha$.
 Since $G$ is $\eps$-close to having arboricity at most $\alpha$, $|E(H)|\leq \alpha|H| +\eps m$, implying that $|E(H)| \leq 2\eps m$.
 \end{proof}	
 \fi

\ifnum\conf=1
The proof of the above claim as well as the next lemma and the procedure it refers to are deferred to the appendix (see Appendix~\ref{sec:appB}).

\else
 Our procedure for estimating the number of high edges makes use of the following theorem.
 \begin{theorem}[Eden \& Rosenbaum~\cite{ER17}, rephrased]
	\label{thm:ER17} \label{Saue}
	Let $G = (V, E)$ be a graph with $n$ vertices and $m$ edges. There exists an algorithm named \Seau,
 that is given query access to $G$ and parameters $n$, $\eps$ and $\delta$.
The algorithm returns an edge $e\in E$ with probability at least $1-\delta$, where
each edge in the graph is returned with probability in $\left[\frac{(1-\eps)}{m},\frac{(1+\eps)}{m}\right].$
The query complexity and running time of the algorithm  are
   $\tilde{O}\left(\frac{n }{\sqrt{ \eps m}}\right)\cdot\poly(\log(1/\delta))$.
\end{theorem}

\confFigure{
			{\bf Estimate-high-edges$(n,\alpha,\eps,\delta, \bar{m})$} \label{Ehe}
			\smallskip
			\begin{compactenum}
				\item Set $r=\setr$ and 	$\delta'=\delta/(2r)$.
				\item For $i=1$ to $r$ do:
				\begin{compactenum}
				\item Invoke \Seau$(n,0.1, \delta')$
and let $(u_i,v_i)$ be the returned edge. If no edge was returned then return \FEW.
				\item Set $\chi_i=1$ if both $u_i$ and $v_i$ are high-degree vertices, and set $\chi_i=0$ otherwise.
				\end{compactenum}
				\item Set $\chi=\frac{1}{r}\sum_{i \in Y} \chi_i$.
				\item If  $\chi > 2.6\eps$,    
                         then return \MANY. Otherwise, return \FEW.
			\end{compactenum}
}
\fi

\begin{lemma}\label{lem:Ehe}
\ifnum\conf=1
There exists a procedure \Ehd\  that
\else
The procedure \Ehd\
\fi
returns a value in {\rm \{\MANY,\FEW\}} and satisfies the following.
\begin{enumerate}
\item If the graph $G$ has more than 
$4\eps m$
high edges, then the procedure returns {\rm \MANY} with probability at least $1-\delta$.
\item If the graph $G$ has at most
$2\eps m$
high edges, then the procedure returns {\rm \FEW} with probability at least $1-\delta.$
\end{enumerate}
\sloppy
The expected query complexity and running time of the procedure are
$\tilde{O}\left(\frac{n}{\sqrt {m}} \cdot \frac{\log(1/\eps)}{\eps} \right)\cdot \poly(\log(1/\delta))$.
\end{lemma}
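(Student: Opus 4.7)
The plan is to sample $r = 200\ln(2/\delta)/\eps$ edges nearly uniformly using \Seau\ (Theorem~\ref{thm:ER17}) invoked with sampling-accuracy parameter $\eps' = 0.1$ and failure probability $\delta' = \delta/(2r)$, and for each returned edge to make two degree queries to decide whether both endpoints lie in $H$. Letting $\chi_i \in \{0,1\}$ indicate that the $i$-th sampled edge is a high edge and $\chi = \frac{1}{r}\sum_i \chi_i$, Theorem~\ref{thm:ER17} gives (conditional on \Seau\ not failing) $\EX[\chi_i] = \mu$ with $\mu \in \bigl[0.9\,|E(H)|/m,\; 1.1\,|E(H)|/m\bigr]$.

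Next I would carry out the two-case analysis. In the \emph{many} case, $|E(H)| > 4\eps m$ gives $\mu > 3.6\eps$, while in the \emph{few} case, $|E(H)| \leq 2\eps m$ gives $\mu \leq 2.2\eps$. The decision threshold $2.6\eps$ sits a constant multiplicative factor away from $\mu$ in either case (roughly $(1-\gamma)\cdot 3.6\eps$ with $\gamma \approx 0.28$, and $(1+\gamma')\cdot 2.2\eps$ with $\gamma' \approx 0.18$). Applying the multiplicative Chernoff bound to $r\chi = \sum_i \chi_i$, with $\EX[r\chi] = r\mu = \Theta(r\eps)$, the deviation probability in each case is at most $\exp(-\Omega(\gamma^2 \cdot r\eps)) = \exp(-\Omega(r\eps))$, which is at most $\delta/2$ for the chosen $r = \Theta(\log(1/\delta)/\eps)$. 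A union bound over the $r$ calls to \Seau\ bounds the probability that any single call fails by $r\delta' = \delta/2$; a failure can only cause a premature \FEW\ output, which is already absorbed into the overall failure budget of $\delta$.

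The main subtlety to keep in mind is that the additive gap between the expected values of $\chi$ in the two cases is only $\Theta(\eps)$; naively using the additive Hoeffding bound would yield a quadratically worse sample complexity of $\Theta(\log(1/\delta)/\eps^2)$. It is crucial that $\mu$ itself is $\Theta(\eps)$, so the ratio of the two cases and the threshold is constant, which is exactly what makes the multiplicative Chernoff bound deliver the desired $1/\eps$ dependence.

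For the complexity, each invocation of \Seau$(n, 0.1, \delta')$ has expected query and running-time cost $\tilde{O}(n/\sqrt{m}) \cdot \poly(\log(1/\delta'))$ by Theorem~\ref{thm:ER17} (since $\eps'=0.1$ is a constant), and the two degree checks add only $O(1)$ queries per sample. Multiplying by $r = O(\log(1/\delta)/\eps)$ and observing $\log(1/\delta') = O(\log(r/\delta)) = O(\log(1/(\eps\delta)))$ absorbs a $\log(1/\eps)$ factor into the polylogarithmic dependence on $\delta$, yielding the stated expected bound $\tilde{O}\bigl(\tfrac{n}{\sqrt{m}} \cdot \tfrac{\log(1/\eps)}{\eps}\bigr)\cdot \poly(\log(1/\delta))$. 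The ``expected'' qualifier is inherited from \Seau.
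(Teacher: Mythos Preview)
Your proposal is correct and matches the paper's proof essentially line for line: the same setting of $r$, $\eps'=0.1$, $\delta'=\delta/(2r)$, the same threshold $2.6\eps$, the same multiplicative Chernoff analysis in both cases, and the same complexity calculation. Your added remark explaining why the multiplicative (rather than additive) Chernoff bound is the right tool here is a nice clarification that the paper leaves implicit.
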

\ifnum\conf=0

\begin{proof}
By Theorem~\ref{thm:ER17}, the invocation of \Seau$(n,0.1,\delta')$ returns an edge $(u_i,v_i)$ with probability at least $1-\delta'$. Therefore, by the setting of $\delta'$ and by the  union bound, with probability at least $\delta/2$ all invocations will return an edge. We henceforth condition on this event.

Also by Theorem~\ref{thm:ER17}, for every edge in the graph $G$, the probability that it will be the returned edge is in $\left[0.9\cdot \frac{1}{m},1.1\cdot \frac{1}{m}\right]$. 
Hence, if $G$ has more than $4\eps m$ 
high edges, then
$\EX[\chi]\geq 0.9 \cdot 4\eps > 3\eps$.
By the multiplicative Chernoff bound, and by the setting of $r$,
\[ \Pr\left[ \frac{1}{r}\sum_{i}^r \chi_i \leq 2.6\eps\right]
        < \exp\left(-\frac{0.1^2\cdot 3\eps\cdot r}{3}\right)
           <\frac{\delta}{2}.
\]
It follows that in this case the procedure will return \MANY\ with probability at least $1-\delta$.

We now turn to the case that that $G$ has at most $2\eps m$ 
high edges, so that $\EX[\chi]\leq   2.2 \eps$.
By the multiplicative Chernoff bound, and by the setting of $r$,
\[ \Pr\left[ \frac{1}{r}\sum_{i}^r \chi_i > 2.6\eps\right] < \exp\left(-\frac{0.1^2\cdot 2.2\eps\cdot r}{2}\right)
<\frac{\delta}{2}.
\]
Therefore, the procedure will return \FEW\ with probability at least $1-\delta$.

\sloppy
It remains to prove the third item of the lemma. By Theorem~\ref{thm:ER17}, the expected query complexity and running time of each invocation of \Seau\ with parameters $0.1$ and $\delta'$ are
$\tilde{O}\left(\frac{n}{\sqrt {m}} \cdot \poly\log(1/\delta')\right)$.
Therefore, by the setting of $r$ and $\delta'$, the expected query complexity and running time are
$\tilde{O}\left(\frac{n}{\sqrt {m}} \cdot \frac{\log(1/\eps)}{\eps} \right)\cdot \poly(\log(1/\delta))$.
\end{proof}
\fi

}

\section{Lower bounds} \label{sec:lower-bound}

The following lower bounds are quite simple and are proved here for the sake of completeness.

\begin{claim}\label{clm:lb1}
For a graph $G$ let $n$ denote the number of vertices in $G$ and let $\om$ be a constant factor approximation of the  number of edges, $m$, in $G$. 
Let ${\mathcal A}$ be an algorithm that is given query access to a graph $G$ as well as parameters $n$, $\om$, $\eps<1/100$, and $\alpha<\sqrt{\eps \om}$. The algorithm $\mathcal{A}$ is required to distinguish
with probability at least $2/3$ between the case that $G$ has arboricity at most $\alpha$ and the case that $G$ is  $20\eps$-far from having arboricity at most $3\alpha$. Then ${\mathcal A}$ must perform $\Omega\left(\frac{n}{\sqrt{\epsilon  m}} \right)$ queries.
\end{claim}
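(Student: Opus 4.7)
The plan is a Yao-style indistinguishability argument: exhibit a fixed ``hard'' YES graph together with a distribution over NO graphs obtained by planting a small random clique, and argue that no sublinear-query algorithm can tell the two apart.

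\textbf{Construction.} Let $G_0$ be a bipartite $\alpha$-regular graph on $n' = 2m/\alpha$ vertices, with the remaining $n - n'$ vertices isolated. Then $|E(G_0)| = m$ and $G_0$ is $\alpha$-degenerate, so it has arboricity at most $\alpha$. Set $k = \lceil C \cdot \max\{\alpha,\sqrt{\eps m}\}\rceil$ for a sufficiently large absolute constant $C$, and define the NO instance as $G_S := G_0 \cup K_S$ where $S \subseteq [n]$ is a uniformly random size-$k$ subset (treating duplicate edges as simple). Then $|E(G_S)| = m + \binom{k}{2} = \Theta(m)$, so a single $\bar m$ is a valid constant-factor estimate for both graphs; moreover $k = O(\sqrt{m})$ in the non-trivial regime $\alpha = O(\sqrt{m})$. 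To see that $G_S$ is $\eps$-far from arboricity $3\alpha$: any graph on $k$ vertices of arboricity $3\alpha$ has at most $3\alpha(k-1) \leq k^2/4$ edges (using $k \geq 12\alpha$), so reducing $K_S$ below arboricity $3\alpha$ requires deleting at least $\binom{k}{2} - 3\alpha(k-1) \geq k^2/8 \geq \eps\,|E(G_S)|$ edges provided $C$ is large enough.

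\textbf{Indistinguishability.} By Yao's principle it suffices to consider a deterministic $\mathcal{A}$. Simulate $\mathcal{A}$ against $G_0$ and let $Q \subseteq [n]$ be the set of vertex identifiers appearing in $\mathcal{A}$'s $q$ queries or in the returned responses; since each query contributes at most two identifiers (the queried vertex itself and at most one returned neighbor), $|Q| \leq 2q$. Now couple by sampling $S$ uniformly \emph{after} the simulation: conditioned on $S \cap Q = \emptyset$, no vertex observed by $\mathcal{A}$ is incident to any clique edge, so the $G_S$-transcript is identical to the $G_0$-transcript and $\mathcal{A}$ produces the same output. Since $S$ is a uniform size-$k$ subset of $[n]$,
$$
\Pr[S \cap Q \neq \emptyset] \;\leq\; \frac{|Q|\, k}{n} \;\leq\; \frac{2qk}{n}.
$$
For $\mathcal{A}$ to distinguish the two cases with probability at least $2/3$ this must be $\Omega(1)$, yielding $q = \Omega(n/k) = \Omega(n/\sqrt{\eps m}) = \Omega(n/\sqrt{m})$ for constant $\eps$.

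\textbf{The main delicate point} is the deferred-decisions coupling: one has to verify that $\mathcal{A}$'s transcript depends only on the subgraph induced by $Q$ together with the degrees of vertices in $Q$, so that the event $S \cap Q = \emptyset$ really does force the $G_0$- and $G_S$-transcripts to coincide. This is immediate from the query model (each query specifies a single vertex and returns at most one new identifier), but it is the point where the ``hitting the planted clique'' intuition becomes a rigorous lower bound.
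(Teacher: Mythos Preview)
Your proof is correct and takes essentially the same approach as the paper: both plant a clique of size $\Theta(\sqrt{m})$ on top of a bipartite $\alpha$-regular graph padded with isolated vertices, and argue that $\Omega(n/\sqrt{m})$ queries are needed to hit the clique. Your version is more careful than the paper's terse sketch (explicit Yao coupling, the transcript argument, and tracking the dependence on $\alpha$ and $\eps$ in the clique size), but the underlying construction and idea are identical.
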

\begin{proof}
Consider the following two families of graphs $\mG_1$ and $\mG_2$. Every graph in $\mG_1$ consists of
three disjoint subgraphs $G^1_1, G^1_2$ and $G^1_3$ as described next.
$G^1_1$ is an independent set of size $n-2\om/\alpha-2\sqrt{100 \eps \om}$;
$G^1_2$ is a bipartite graph with $\om/\alpha$ vertices on each side and  $\alpha$ perfect matchings between the two sides;
$G^1_3$ is an independent set of size $2\sqrt{100 \eps \om}$;
The graphs within the family differ from one another only by 
the labeling of the vertices.
The graphs in the second family $\mG_2$ also consists of three disjoint subgraphs $G^2_1, G^2_2, G^2_3$. Here we have $G^2_1=G^1_1, G^2_2=G^1_2$ and $G^2_3$ is a complete bipartite graph with $\sqrt{100 \eps \om}$ vertices on each side.
As before, the graphs within the family differ only by 
the labeling of the vertices.

All graphs in $\mG_1$ have $\om$ edges, and all graphs in $\mG_2$ have $\om + 100\eps\om < 2\om$ edges.
Furthermore, all graphs in $\mG_1$ have arboricity $\alpha$, while in $\mG_2$ all graphs are $20 \eps$-far from having arboricity at most $3\alpha$.
To verify the latter claim, observe that after removing $20\eps m < 40\eps \om$ edges from any graph in $\mG_2$, the number of edges remaining in the subgraph $G^2_3$ is greater than $100\eps \om - 40 \eps\om = 60\eps\om$. Since the number of vertices in $G^2_3$ is $2\sqrt{100\eps\om}=20\sqrt{\eps \om}$, the arboricity of $G^2_3$ (after the removal of the aforementioned edges) is greater than $3\sqrt{\eps\om} > 3\alpha$.

In order to distinguish between a graph drawn uniformly from the first family and a graph drawn uniformly from the second family, an algorithm must witness a vertex in $G^i_3$ for $i \in  \{1,2\}$.
Since the probability of witnessing such a vertex for both $i=1$ and $i=2$ is 
$\frac{|V(G^i_3)|}{n} = \frac{2\sqrt{100 \eps \om}}{n}$, at least $\Omega\left(\frac{n}{\sqrt{\epsilon m}}\right)$ queries are required in order to distinguish between the two families  with 
probability at least $2/3$.
\end{proof}

\tnew{
The proof of Claim~\ref{clm:lb1} relies on the ability to construct a lower bound instance where we ``hide'' a small set of vertices with very high density. When the algorithm is also given the exact number of edges in the graph this is no longer possible, and the above lower bound does not hold.
Instead, for the case where $m$ is known, we prove a weaker lower bound of $\Omega(\frac{n\alpha}{\tnew{\eps} m})$.
}

\begin{claim}\label{clm:lb2}
For a graph $G$ let $n$ denote the  number of vertices in $G$ and let $m$ denote the number of edges.
Let ${\mathcal A}$ be an algorithm that is given query access to a graph $G$ as well as parameters $n,m,\eps<1/100$ and $\alpha \leq 
    \sqrt{\eps m}$.
The algorithm $\mathcal A$ is required to distinguish with 
probability at least $2/3$ between the case that $G$ has arboricity at most $\alpha$ and the case that $G$ is  $20\eps$-far from having arboricity at most $3\alpha$.
Then ${\mathcal A}$ must perform $\Omega\left(\frac{n\alpha}{ \tnew{\eps}m} \right)$ queries.
\end{claim}
\begin{proof}

We describe two families of graphs $\mG_1$ and $\mG_2$. Each graph in the first family $\mG_1$ consists of four disjoint subgraphs, $G^1_1, G^1_2, G^1_3$ and $G^1_4$, which are defined as follows: $G^1_1$ is an independent set over $n-2m/\alpha -2\sqrt{100\eps m}$ vertices; $G^1_2$ is a bipartite graph with $(1-100\eps)m/\alpha$ vertices on each side, and $\alpha$ perfect matchings between the sides; $G^1_3$ is a bipartite graph  with $100 \eps m/\alpha$ vertices on each side, and $\alpha$ perfect matchings between the sides; $G^1_4$ is an independent set over $2\sqrt{100\eps m}$ vertices.
The graphs in the second family $\mG_2$ also consists of four disjoint subgraphs $G^2_1, G^2_2, G^2_3$ and $G^2_4$. Here we have $G^2_1 = G^1_1$ and $G^2_2 = G^1_2$, $G^2_3$ is an independent set over $200\eps m/\alpha$ vertices and $G^2_4$ is a complete bipartite graph with $\sqrt{100\eps m}$ vertices on  each side. Within each family, the graphs differ only by the labeling of the vertices. By the above description we get that graphs in both families have exactly $m$ edges. Furthermore, the graphs in $\mG_1$ have arboricity $\alpha$, and the graphs in $\mG_2$ are $20\eps$-far from having arboricity $3\alpha$ (this follows similarly to what was shown in the proof of Claim~\ref{clm:lb1}).

We assume without loss of generality that every neighbor query $(u,i)$ or pair query $(u,v)$ is preceded by one or two degree queries $d(u)$ or $d(u), d(v)$, respectively. Furthermore,
assume that whenever the algorithm queries for the degree of some vertex $u$, it also gets an index $j \in \{1,2,3,4\}$ indicating to which of the subgraphs $G^{i}_1, G^{i}_2, G^{i}_3, G^{i}_4$ it belongs (without revealing the value of $i$).  Since $G^1_1=G^2_1$ and $G^1_2=G^2_2$, it is clear that in order to distinguish between a graph drawn from $\mG_1$ to a graph drawn from $\mG_2$ any algorithm must hit either $G^{i}_3$ or $G^{i}_4$ for the corresponding $i$ value. Since for both $i=1$ and $i=2$,  $|G^{i}_3|+|G^{i}_4| = O(\frac{\eps m}{\alpha})$ (recall that we assume that $\alpha\leq \sqrt{\eps m}$), we have that hitting either $G^{i}_3$ or $G^{i}_4$ occurs with probability $O\left(\frac{\eps m}{n \alpha}\right)$, so that  $\Omega\left(\frac{n\alpha}{\eps m}\right)$ queries are required in order to distinguish the two families with 
probability at least $2/3$.
\end{proof}

Finally we establish that there is no  one-sided error algorithm for bounded arboricity  that performs a  number of queries
that is sublinear in $n$.
\begin{claim}\label{clm:one-sided}
Let ${\mathcal A}$ be an algorithm that is given query access to a graph $G$ as well as parameters
\danachange{$n$, $\alpha \geq 2$
and $\eps\leq \frac{1}{4}$ (where $n$ is the number of vertices in $G$).}
It is required to accept $G$ with probability 1 if $G$ has arboricity at most $\alpha$ and reject  $G$ with probability at least $2/3$ if
 $G$ is  $\eps$-far from having arboricity at most $3\alpha$.
 Then ${\mathcal A}$ must perform $\Omega(n)$ queries.
\end{claim}

\begin{proof}
\changed{
We shall prove that for any $\alpha \geq 2$, sufficiently large $n$ and $\eps \leq 1/4$, there exists a graph $G$ over $n$ vertices for which the following two conditions hold.
 On one hand, $G$ is $\eps$-far from having arboricity $3\alpha$. On the other hand, for any $k \leq n/c$, where $c$ is a sufficiently large constant, every induced subgraph of $G$ over $k$ vertices has arboricity at most $\alpha$. Therefore, for this graph, any one-sided error algorithm must perform $\Omega(n)$ queries.

 Consider selecting $G$ according to the distribution $G(n,p)$ where $p = \frac{10\alpha}{n}$.
 That is, for each pair of vertices, the probability that we have an edge between these two vertices is $p$ (and the corresponding events for different pairs of vertices are independent).
 The expected number of edges in $G$ is ${n\choose 2}\cdot p = 5\alpha(n-1)$. By applying the multiplicative Chernoff bound, with very high probability, the number of edges in $G$ is at least $4\alpha(n-1)$ (i.e., at least $4/5$ of the expected value), so that by Equation~(\ref{eq:Nash-Williams}), $G$ is at least $1/4$-far from having arboricity $3\alpha$.

Let $k$ be an integer such that $4< k \leq n/c$, and let $K$ be a subset of $k$ vertices.
We next upper bound the probability that the number of edges in the subgraph induced by $K$, denoted $m(K)$, is more than $\alpha(k-1)$. For any fixed set $B$ of $\alpha(k-1)$ pairs of vertices, the probability that we get an edge between every pair in $B$ is $p^{\alpha(k-1)}$. Taking a union bound over all such subsets $B$, and using the inequality ${y \choose x} \leq \left(\frac{e\cdot y}{x}\right)^x$, the probability that $m(K) > \alpha(k-1)$ is upper bounded by
\[
{{k\choose 2}\choose \alpha(k-1)} \cdot p^{\alpha(k-1)} \leq \left( \frac{e\cdot k \cdot p}{2\alpha}\right)^{\alpha(k-1)}
   =   \left( \frac{5e\cdot k}{n}\right)^{\alpha(k-1)}\;.
\]
By taking a union bound over all ${n\choose k}$ subsets of size $k$ we get that the probability that there exists any such subset of size $k$ is upper bounded by
\begin{eqnarray*}
{n\choose k} \cdot \left( \frac{5e\cdot k}{n}\right)^{\alpha(k-1)} &\leq& \left(\frac{e\cdot n}{k}\right)^k
            \cdot  \left( \frac{5e\cdot k}{n}\right)^{\alpha(k-1)} \;\leq\; (5e^2)^{\alpha(k-1)} \cdot
            \left(\frac{k}{n}\right)^{\alpha(k-1)-k} \\
            &\leq& \left( (5e^2)^3 \cdot \frac{k}{n}  \right)^{\alpha(k-1)-k}
             \; \leq \; \left( (5e^2)^3 \cdot \frac{k}{n}  \right)^{k-2}\;,
\end{eqnarray*}
where we have used the fact that $\frac{\alpha(k-1)}{\alpha(k-1)-k} \leq 3$ for $\alpha \geq 2$ and $k \geq 4$,
and that $\alpha(k-1)-k \geq k-2$ for $\alpha \geq 2$. By setting $c = 2(5e^2)^3$, this probability is upper bounded by $2^{-(k-2)}$, and by summing over all $k>4$ we get that the probability is bounded away from 1.
}
\end{proof}

As mentioned in the introduction, for the case of $\alpha=1$ (cycle-freeness),
there is a one-sided error testing algorithm~\cite{CGRSSS} that performs  $\tilde{O}(\sqrt n)$ queries (and these many queries are necessary~\cite{GR02}).
	\input{variations}

\newcommand{\ta}{\widetilde{\alpha}}
\newcommand{\oa}{\overline{\alpha}}
\newcommand{\as}{{\alpha^*}}
\newcommand{\Ea}{\hyperref[Erd]{\color{black} \bf Estimate-Corrected-Arboricity}}

\section{Estimating the corrected arboricity} \label{sec:corrected}

	In this section we present a procedure for estimating what we refer to as the \emph{$\eps$-corrected-arboricity} of a graph $G$. The $\eps$-corrected-arboricity of a graph $G$ is the minimal arboricity of a graph that $G$ can be ``corrected into". That is, 	
	it is the minimal arboricity over all the graphs that are $\eps$-close to $G$ (See Definition~\ref{def:corrected}).
	\changed{
		The procedure performs a standard geometric search on the value of the ``corrected arboricity'' using the testing algorithm \Iba, and we 
provide the procedure
here for the sake of completeness.
		More precisely, the procedure first obtains an estimate  $\om$ of $m$. Then it starts with a guess value $\ta=1$, and for each guess value $\ta$ it invokes \Iba\ with $\ta$ for $O(\log \log {\om})$ times.  If the majority of  votes return \YES, then the algorithm returns $\ta$, and otherwise it continues with $\ta=2\ta$.  The search ends when $\ta$ exceeds $\sqrt{\om}$, at which point, the algorithm simply returns $\sqrt{\om}$. (Recall that for every graph $G$, $\alpha \leq \sqrt m$.)
	 }
	We note that the overhead of this procedure, compared to the testing algorithm, is a factor of $\poly(\log n)$.
	
	\begin{theorem}[Goldreich \& Ron~\cite{GR08}, rephrased] \label{thm:GR08} There exists a procedure that when invoked with a graph $G$ and a confidence parameter $\delta$, returns a value $\om$ such that with probability at least $1-\delta$, $\om \in [m,2m]$, where $m$ is the number of edges in $G$. The expected running time of the procedure is ${O}(n\log(\tnew{n}/\delta)/\sqrt m)$.
	\end{theorem}	

\ifnum\TALG=0		
	\begin{figure}[htb!]
		\fbox{
			\begin{minipage}{0.9\textwidth}
				{\bf \Ea$(G,n,\eps)$} \label{Ea}
				\smallskip
				\begin{compactenum}
				\item Invoke \cite{GR08}$(G,1/9)$ to obtain an estimate $\om$ of the number of edges in $G$.
				\item Let $\ta=1$ and $\delta=1/(9\log \om).$
				\item While $\ta\leq \sqrt {\om}$ do: \label{step:EaWhile}
					\begin{compactenum}
					\item Invoke \Iba$(G,n,\ta,\eps)$ for $10\log(1/\delta)$ times. \label{step:EaInvoke}
					\item If more than half of the invocations returned \textsf{Yes}, then return $\oa=\ta$.\label{step:EaReturn}
					\item Let $\ta=2\ta$. \label{step:EaDouble}							
					\end{compactenum}
				\item Return $\sqrt {\om}$.
				\end{compactenum}
			\end{minipage}
		}
	\end{figure}
\else
\floatname{algorithm}{Algorithm}
\begin{algorithm}
\caption{\Ea$(G,n,\eps)$} \label{Ea}
\begin{algorithmic}[1]
	\STATE Invoke \cite{GR08} with parameters $G$ and $1/9$ to obtain an estimate $\om$ of the number of edges in $G$. \label{step:est-m}
	\STATE Let $\ta=1$ and $\delta=1/(9\log \om).$
	\WHILE {$\ta\leq \sqrt {\om}$}  \label{step:EaWhile}
		\STATE Invoke \Iba$(G,n,\ta,\eps)$ for $10\log(1/\delta)$ times. \label{step:EaInvoke}
		\STATE If more than half of the invocations returned \textsf{Yes}, then return $\oa=\ta$.\label{step:EaReturn}
		\STATE Let $\ta=2\ta$. \label{step:EaDouble}							
	\ENDWHILE
	\RETURN $\sqrt {\om}$.
\end{algorithmic}
\end{algorithm}
\floatname{algorithm}{Algorithm}
\fi

	\begin{definition}[$\eps$-Corrected Arboricity]\label{def:rStar} \label{def:corrected}
		Let $\mathcal{G}_n$ denote the family of graphs over $n$ vertices.
		For a graph $G\in \mG_n$ let $\as(G,\eps)$ denote the minimal value $\alpha'$ such that $G$ is $\eps$-close to a graph $G'$ with arboricity at most $\alpha'$. That is, $\as(G,\eps)=\min_{G' \in \mG_n}\{ \alpha(G') \mid dist(G,G')\leq \eps \cdot m(G)\}$.
		We refer to the value $\as$ as the \emph{$\eps$-corrected arboricity} of the graph $G$.
		\end{definition}
	\begin{claim} The algorithm \Ea\ when invoked with a graph $G$ over $n$ vertices and parameter $\eps$ returns a value $\oa$ such that with probability at least $2/3$,
	$\as(G,20\eps)/3 \leq \oa \leq 2\as(G,\eps)$.
	 The expected query complexity and running time of the algorithm are $$\tilde{O}\left(\frac{n}{\eps \sqrt m}+ \left(\frac{1}{\eps}\right)^{O(\log (1/\eps))}\right)\;.$$
		\end{claim}
	\begin{proof}
	By Theorem~\ref{thm:GR08}, with probability at least $8/9$, the value $\om$ is such that $\om\in[m,2m]$. We henceforth condition on this event.
	
	For every value $\ta$ such that $\ta< \as(G,20\eps)/3$, it holds that
	$G$ is at least $20\eps$-far from every graph with arboricity at most $3\ta$. Hence, by Theorem~\ref{thm:test-arb}, every invocation of \Iba$(G,n,\ta,\eps)$  returns \textsf{No} with probability at least $2/3$, and by a simple amplification argument, the probability that more than half of the invocations
	return  \YES\ in Step~\ref{step:EaReturn} is at most $\delta$.
	Therefore, with probability at least $1-\delta$, \Ea\ will continue to run with a value $2\ta$.
	Since there are at most $\log(\sqrt m)$ invocations with a value $\ta < \as(G,20\eps)/3$, by a union bound, the probability that \Ea\ will return a value $\ta<\as(G,20\eps)/3$ is at most $1/9$.
	
	Once we reach a value $\ta$ such that
	$\ta \geq \as(G,\eps)$, then by the definition of $\as$, $G$ is $\eps$-close to having arboricity at most $\ta$, and therefore, by Theorem~\ref{thm:test-arb}, every invocation of \Iba$(G,\ta,\eps)$ returns \textsf{Yes} with probability at least $2/3$. Hence, with probability at least $1-\delta$, more than half of the invocations of \Iba\  return \YES\, and the algorithm returns $\ta$. Since we increase $\ta$ by a factor $2$ at every step it holds that we will reach a value $\ta$ such that $\ta\in[\as(G,\eps),2\as(G,\eps)]$, and by the above, once we reach such a value \Ea\ will return $\ta$ with probability at least $1-\delta>8/9$.
	
	By a union bound, with probability at least $2/3$, \Ea\ returns a value $\oa$ such that $\as(G,20\eps)/3 \leq \oa\leq 2\as(G,\eps)$.
	
	\tnew{By Theorem~\ref{thm:GR08}, estimating the number of edges in Step~\ref{step:est-m}, takes $O\left(\frac{n \cdot \log^2 n}{\sqrt m}\right)$ time in expectation.}
	By Theorem~\ref{thm:test-arb}, every invocation of the while loop in Step~\ref{step:EaWhile} takes
	$\comp$
	 time in expectation.
	Since the number of iterations is at most $\log(\bar m)$ (and $\bar m\leq n^2$), the query complexity and running time are
	 $$\tilde{O}\left(\frac{n}{\eps \sqrt m}+ \left(\frac{1}{\eps}\right)^{O(\log (1/\eps))}\right)$$
		  in expectation.
	
	\end{proof}


\subsection*{Acknowledgments}
We would like to thank the reviewers of the ACM Transactions on Algorithms journal for their helpful comments.

\bibliographystyle{plain}
\bibliography{arb_bib}

\appendix
\ifnum\conf=1
We make use of Hoeffding's inequality \cite{hoeffding}, stated next.
For $i=1, \ldots, s$, let $\chi_i$ be a $0/1$ values random variable, such that $\Pr[\chi_i=1] = \mu$.
Then for any $\gamma \in (0,1]$,
\[ \Pr\left[\frac{1}{s}\sum\limits_{i=1}^s \chi_i > \mu+\gamma \right] < \exp\left( -2\gamma^2 s\right) \]
and
\[	\Pr\left[\frac{1}{s}\sum\limits_{i=1}^s \chi_i < \mu-\gamma \right] < \exp\left( -2\gamma^2 s\right)\;.\]
We also make use of the following version of  the multiplicative Chernoff bound \cite{chernoff}.
For $i=1, \ldots, s$, let $\chi_i$ be a random variables taking values in $[0,B]$, such that  $\EX[\chi_i]=\mu$. Then
for any $\gamma \in (0,1]$,
\[ \Pr\left[\frac{1}{s}\sum\limits_{i=1}^m \chi_i > (1+ \gamma)\mu \right] < \exp\left( -\frac{\gamma^2 \mu s}{3B}\right)
\]
and
\[\Pr\left[\frac{1}{s}\sum\limits_{i=1}^m \chi_i < (1- \gamma)\mu \right] < \exp\left( -\frac{\gamma^2 \mu s}{2B}\right)\;.\]
\fi

\ifnum\conf=1

\section{Missing details for Subsection~\ref{subsec:high}}\label{sec:appB}

\begin{proofof}{Claim~\ref{clm:high-deg-edges}}
	Let $E(H)$ denote the set of high edges.
	By the definition of high-degree vertices, $|H|<\eps m/\alpha$.
	Since $G$ is $\eps$-close to having arboricity at most $\alpha$, $|E(H)|\leq \alpha|H| +\eps m$, implying that $|E(H)| \leq 2\eps m$.
\end{proofof}	

\smallskip
Our procedure for estimating the number of high edges makes use of the following theorem.
\begin{theorem}[Eden and Rosenbaum~\cite{ER17}
	\ifnum\soda=0, rephrased
	\fi
	]
	\label{thm:ER17} \label{Saue}
	Let $G = (V, E)$ be a graph with $n$ vertices and $m$ edges. There exists an algorithm named \Seau,
	that is given query access to $G$ and parameters $n$, $\eps$ and $\delta$.
	The algorithm returns an edge $e\in E$ with probability at least $1-\delta$, where
	each edge in the graph is returned with probability in $\left[\frac{(1-\eps)}{m},\frac{(1+\eps)}{m}\right].$
	The query complexity and running time of the algorithm  are
	$\tilde{O}\left(\frac{n }{\sqrt{ \eps m}}\right)\cdot\poly(\log(1/\delta))$.
\end{theorem}

\begin{figure}[htb!]
	\fbox{
		\begin{minipage}
			{0.465\textwidth}
			\begin{raggedright}
			
			{\bf Estimate-high-edges$(n,\alpha,\eps,\delta, \bar{m})$} \label{Ehe}
			\smallskip
			\begin{compactenum}
				\item Set $r=\setr$ 
				\item For $i=1$ to $r$ do:
				\begin{compactenum}
					\item Invoke \Seau$(n,0.1, \delta')$
					and let $(u_i,v_i)$ be the returned edge. If no edge was returned then return \FEW.
					\item Set $\chi_i=1$ if both $u_i$ and $v_i$ are high-degree vertices, and set $\chi_i=0$ otherwise.
				\end{compactenum}
				\item Set $\chi=\frac{1}{r}\sum_{i \in Y} \chi_i$.
				\item If  $\chi > 2.6\eps$, 
				then return \MANY. Otherwise, return \FEW.
			\end{compactenum}
			\end{raggedright}
		\end{minipage}
	}
\end{figure}

\begin{proofof}{Lemma~\ref{lem:Ehe}}
	By Theorem~\ref{thm:ER17}, the invocation of \Seau$(n,0.1,\delta')$ returns an edge $(u_i,v_i)$ with probability at least $1-\delta'$. Therefore, by the setting of $\delta'$ and by the  union bound, with probability at least $\delta/2$ all invocations will return an edge. We henceforth condition on this event.
	
	Also by Theorem~\ref{thm:ER17}, for every edge in the graph $G$, the probability that it will be the returned edge is in $\left[0.9\cdot \frac{1}{m},1.1\cdot \frac{1}{m}\right]$. 
	Hence, if $G$ has more than $4\eps m$ 
	high edges, then
	$\EX[\chi]\geq 0.9 \cdot 4\eps > 3\eps$.
	By the multiplicative Chernoff bound, and by the setting of $r$,
	\[ \Pr\left[ \frac{1}{r}\sum_{i}^r \chi_i \leq 2.6\eps\right]
	< \exp\left(-\frac{0.1^2\cdot 3\eps\cdot r}{3}\right)
	<\frac{\delta}{2}.
	\]
	It follows that in this case the procedure will return \MANY\ with probability at least $1-\delta$.
	
	We now turn to the case that that $G$ has at most $2\eps m$ 
	high edges, so that $\EX[\chi]\leq   2.2 \eps$.
	By the multiplicative Chernoff bound, and by the setting of $r$,
	\[ \Pr\left[ \frac{1}{r}\sum_{i}^r \chi_i > 2.6\eps\right] < \exp\left(-\frac{0.1^2\cdot 2.2\eps\cdot r}{2}\right)
	<\frac{\delta}{2}.
	\]
	Therefore, the procedure will return \FEW\ with probability at least $1-\delta$.
	
	\sloppy
	It remains to prove the third item of the lemma. By Theorem~\ref{thm:ER17}, the expected query complexity and running time of each invocation of \Seau\ with parameters $0.1$ and $\delta'$ are
	$\tilde{O}\left(\frac{n}{\sqrt {m}} \cdot \poly\log(1/\delta')\right)$.
	Therefore, by the setting of $r$ and $\delta'$, the expected query complexity and running time are
	$\tilde{O}\left(\frac{n}{\sqrt {m}} \cdot \frac{\log(1/\eps)}{\eps} \right)\cdot \poly(\log(1/\delta))$.
\end{proofof}
\fi

\section{Adaptation of the algorithm when given a  $(1\pm \eps/c)$ estimate of $m$} \label{sec:given m}

In this section we describe an adaption of the algorithm for the case where it is given a  $(1\pm \eps/c)$ estimate of $m$.  It will be easier to think of every edge $\{u,v\}\in E$ as two distinct \emph{directed} edges $(u,v)$ and $(v,u)$.
We take advantage of the following definitions and simple claim.

\begin{definition}
	We say that a vertex $v$ is \emph{high} if $d(v)> 2\alpha/\eps$. Otherwise, we say it is \emph{low}.
	
	For an edge $(u,v)$, if  $u$ is low then we say it is a  \emph{directed low edge}, and otherwise we say that it is a \emph{directed high edge}.
\end{definition}

  \begin{claim}\label{clm:number_high}
 	If a graph $G$ has arboricity at most $\alpha$, then it has at most $2\eps m$ 
 	directed high edges.
 \end{claim}
 \begin{proof}
 	Let $H$ denote the set of high degree vertices in the graph. Then $|H|< 2m/(2\alpha/\eps)=\eps m/\alpha$, and it follows that $|E(H)|< \alpha |H|=\eps m$, implying that the number of  directed high edges is at most $2\eps m$.
 \end{proof}

We shall also make use of Lemma 3.1 of Eden and Rosenbaum~\cite{ER17-arxiv} for sampling directed light edges.\footnote{This theorem appears only in version \cite{ER17-arxiv}, and not in the final version of the same paper~\cite{ER17}.}

 For a degree threshold $\theta$ let $E_{\leq \theta}$ denote the set of directed edges $(u,v) \in E$ such that $d(u)\leq \theta$.

\begin{lemma}[Eden \& Rosenbaum \cite{ER17-arxiv}] \label{lem:Sle}
	Let $G = (V, E)$ be a graph with $n$ vertices and $m$ edges.
	There exists a procedure named \Sle\ that given $\theta$, returns a directed edge in $E_{\leq \theta}$ with probability $\frac{|E_{\leq \theta}|}{n \cdot \theta}$. Furthermore, the procedure performs a constant number of queries and each directed edge in $E_{\leq \theta}$ is returned with equal probability.
\end{lemma}

\renewcommand{\setr}{\frac{n\alpha}{\om} \cdot \frac{200\ln(\tnew{1}/\delta)}{\eps^{\danachange{3}}}}
\newcommand{\Ehe}{\hyperref[Ehe]{\color{black} \bf Estimate-High-Edges}}
\newcommand{\ml}{m_{low}}
\newcommand{\mh}{m_{high}}
\newcommand{\oml}{\overline{m}_{low}}
\newcommand{\omh}{\overline{m}_{high}}

\ifnum\TALG=0
\confFigure{
	{\bf \Ehe$(n,\alpha,\eps,\delta, \bar{m})$} \label{Ehe}
	\smallskip
	\begin{compactenum}
		\item Set $r=\setr$.
		\item For $i=1$ to $r$ do: \label{step:ehe_loop}
		\begin{compactenum}
			\item Invoke the procedure \Sle\
\danachange{with $\theta = 2\alpha/\eps$} and if a directed edge $(u,v)$ was returned such that $u \prec v$, then let $\chi_i=1$. Otherwise, let $\chi_i=0$.
		\end{compactenum}
		\item Let $\chi=\frac{1}{r}\sum_{i=1}^r \chi_i$.

		\item Let $\oml = \frac{2n \cdot \alpha}{\eps}\cdot \chi$ and let $\omh= \om-\oml$.
		\item If  $\omh > 5\eps \om /2$, then return \MANY. Otherwise, return \FEW.
	\end{compactenum}
}
\else
\floatname{algorithm}{Procedure}
\begin{algorithm}
\caption{\Ehe$(n,\alpha,\eps,\delta, \bar{m})$} \label{Ehe}
\begin{algorithmic}[1]
\STATE Set $r=\setr$ .
\STATE Invoke the procedure \Sle\ for $r$ times with $\theta = 2\alpha/\eps$ and let $\chi_i=1$ if the $i\th$ invocation returned an edge, and otherwise let $\chi_i=0$.
\STATE Let $\chi=\frac{1}{r}\sum_{i=1}^r \chi_i$.
\STATE Let $\oml = \frac{2n \cdot \alpha}{\eps}\cdot \chi$ and let $\omh= \om-\oml$.
\STATE If  $\omh > 5\eps \om /2$, then return \MANY. Otherwise, return \FEW.

\end{algorithmic}
\end{algorithm}
\floatname{algorithm}{Algorithm}
\if

\begin{claim}\label{clm:Ehe}
Assume that  $\om \in (1\pm \eps/4)m$. If there are more than $3\eps m$ directed high edges in the graph, then with probability at least $1-\delta$, \Ehe\ returns \MANY, and if there are at most $2\eps m$ directed high edges in the graph, then with probability at least $1-\delta$, \Ehe\ returns \FEW. The query complexity and running time of the procedure are
$O\left(\frac{n\alpha\cdot \log(1/\delta)}{\eps^{\danachange{3}}\cdot m}\right)$.
\end{claim}
\begin{proof}
Let $\ml$ and $\mh$ denote the number of directed low and high edges in the graph, respectively.
We first consider the case that $G$ has more than $3\eps m$ high edges, so that $\ml \leq (1-3\eps)m$.
By the above and Lemma~\ref{lem:Sle}, $\EX[\chi]=\frac{\ml}{(2n\alpha/\eps)}\leq \frac{(1-3\eps)m}{(2n\alpha/\eps)} \;.
$
Therefore, by the multiplicative Chernoff bound, the setting of $r=\setr$ in the algorithm, the assumption
\danachange{that $\om \leq (1+\eps/4)m$, and the assumption that $\eps \leq 1/20$},
\begin{eqnarray}
\Pr\left[\chi > \left(1+\frac{\eps}{4}\right)\cdot \frac{(1-3\eps)m}{(2n\alpha/\eps)}\right] &<&
\exp\left(-\frac{\eps^2 \cdot \frac{(1-3\eps)m}{(2n\alpha/\eps)}\cdot r }{16\cdot 3}\right) \nonumber \\
&=& \danachange{ \exp\left(\frac{\eps^3(1-3\eps) m}{96 n\alpha}\cdot \setr \right)}
\;<\; \delta 
\;.
\end{eqnarray}

Hence, with probability at least $1-\delta$,
	$$\chi \leq \frac{(1+\eps/4)\cdot (1-3\eps)\eps m}{2n \cdot \alpha} <
	\frac{(1-5\eps/2) \cdot \eps\om}{2n \cdot \alpha}$$ and $\oml < (1-5\eps/2) \cdot \om\;.$
It follows that  $\omh > 5\eps \om/2 $ with probability at least $1-\delta$, so that the algorithm will return \MANY.

We now consider the case that $G$ has less than $2\eps m$ high edges, so that $\ml > (1-2\eps)m $. Hence, by Lemma~\ref{lem:Sle}, $\EX[\chi]> \frac{(1-2\eps)m}{(2n\alpha/\eps)}$.
\tnew{By the multiplicative Chernoff bound, the setting of $r$, the assumption
that $\om \leq (1+\eps/4)m$ and the assumption that $\eps \leq 1/20$},
\[
\Pr\left[\chi < \left(1-\frac{\eps}{4}\right)\cdot \frac{(1-2\eps)m}{(2n\alpha/\eps)}\right] <
\exp\left(-\frac{\eps^2 \cdot \frac{(1-2\eps)m}{(2n\alpha/\eps)}\cdot r }{16\cdot 2}\right) < \delta \numberthis \label{eqn:chi}
\;.\]

\danachange{It follows that} with probability at least $1-\delta$,
$$\chi\geq\frac{(1-\eps/4)(1-2\eps)\eps m}{2n \cdot \alpha} > \frac{(1-5\eps/2)\cdot \eps \om }{2n \cdot \alpha}. $$
Therefore, with probability at least $1-\delta$, $\danachange{\omh} < 5\eps \om/2$ and the procedure returns \FEW.

By Lemma~\ref{lem:Sle}, the query complexity and running time of each invocation of the procedure \Sle\
\danachange{(with $\theta = 2\alpha/\eps$)}  are $O(1)$. Hence, the query complexity and running time of the procedure are  $O(r)=O\left(\frac{n\alpha\cdot \log(1/\delta)}{\eps^{\danachange{3}}\cdot m}\right)$.
\end{proof}

	 Consider modifying the algorithm \Iba\ (from Section~\ref{sec:test-arb}) as follows. We first check if there are many high edges in the graph, and if so we reply that the graph is far from having arboricity at most $3\alpha$. Otherwise, we sample light edges and check if their endpoints are active.

 \newcommand{\tapp}{\frac{1000n \alpha}{\eps^{\tnew{2}} \om}}
 \newcommand{\sapp}{\frac{800}{\eps}}
\ifnum\TALG=0
 \confFigure{
 	{\bf \Ibgm$(G,\alpha,\eps,\om)$} \label{Iba}
 	\smallskip
 	\begin{compactenum}
 			 \item Invoke \Ehd$(n,\alpha,\eps, 1/12,\om)$, and if the procedure returns \MANY\ then return \NO. \label{step:Ehe}
			\item Invoke 
\Sle\ \danachange{with $\theta = 2\alpha/\eps$} for $t=\tapp$ times, and let $S$ be the (multi-)set of returned \emph{directed} edges. Let $s$ be the number of (not necessarily different) edges in $S$. \label{step:sample}
			\item If $s<\sapp$ then return \NO. \label{step:small_s}
			\item Set $\ell = \lceil\log_{6/5}(1/\eps)\rceil$
			.
			\item For every directed edge $(u_i, v_i) \in S$ do:
\begin{compactenum}
	\item Invoke \IsActive$(u_i,\ell,\danachange{\alpha,\eps},\eps/2)$ and \IsActive$(v_i,\ell,\danachange{\alpha,\eps},\eps/2)$.
	If the procedure returned \YES\ on both invocations, then set $\chi_i=1$. Otherwise, set $\chi_i=0$.
\end{compactenum}
\item Set $\chi=\frac{1}{s}\sum_{i=1}^s \chi_i$.
\item If $\chi <10\eps m$, then return \YES. Otherwise, return \NO.
 	\end{compactenum}
 }
\else
\begin{algorithm}
\caption{\bf \Ibgm$(G, n,\alpha,\eps,\om)$} \label{Ibagm}
\begin{algorithmic}[1]
	\STATE Invoke \Ehd$(n,\alpha,\eps, 1/12,\om)$, and if the procedure returns \MANY\ then return \NO. \label{step:Ehe}
	\STATE Invoke 
 \Sle\ \danachange{with $\theta = 2\alpha/\eps$} for $t=\tapp$ times, and let $S$ be the (multi-)set of returned \emph{directed} edges. Let $s$ be the number of (not necessarily different) edges in $S$. \label{step:sample}
	\STATE If $s<\sapp$ then return \NO. \label{step:small_s}
	\STATE Set $\ell = \lceil\log_{6/5}(1/\eps)\rceil$.
	\FOR{every directed edge $(u_i, v_i) \in S$}
		\STATE Invoke \IsActive$(u_i,\ell,\danachange{\alpha,\eps},\eps/2)$ and \IsActive$(v_i,\ell,\danachange{\alpha,\eps},\eps/2)$.
		If the procedure returned \YES\ on both invocations, then set $\chi_i=1$. Otherwise, set $\chi_i=0$.
	\ENDFOR
	\STATE Set $\chi=\frac{1}{s}\sum_{i=1}^s \chi_i$.
	\STATE If $\chi <12\eps$, then return \YES. Otherwise, return \NO.
\end{algorithmic}
\end{algorithm}
\fi

 \begin{theorem}\label{thm:test-given-m}
 	Assume that $\om\in (1\pm\eps/4)m$. If $G$ is $\eps$-close to having arboricity at most  $\alpha$, then  \Ibgm\ returns {\rm \YES} with probability at least $2/3$, and if $G$ is
 	$20\eps$-far from having arboricity at most $3\alpha$, then \Ibgm\ returns \NO\ with probability at least $2/3$.
 	
 	The query complexity and running time of the algorithm are
 $$\tilde{O}\left(\frac{n\alpha}{\eps^{\danachange{3}} m}  +
 	\left(\frac{1}{\eps}\right)^{O(\log(1/\eps))}\right)$$
 	in expectation.  
 \end{theorem}

 \begin{proof}
 	
 	We first consider the case that $G$ is $\eps$-close to having arboricity at most $\alpha$, and prove that the algorithm returns \YES\ with probability at least $2/3$.
 	If $G$ is $\eps$-close to having arboricity at most $\alpha$, then it follows from Claim~\ref{clm:number_high} that $G$ has less than $2\eps m$ directed high edges. Therefore, by Claim~\ref{clm:Ehe}, the procedure \Ehe\ returns \FEW\ with probability at least $11/12$. We henceforth condition on this event.

By Lemma~\ref{lem:Sle}\  each invocation of
\Sle\ with $\theta = 2\alpha/\eps$
returns a directed light edge with probability at least $ \frac{(1-2\eps)m}{(2n\alpha/\eps)}$.
\tnew{Let us denote this probability by $p_{succ}$ and let $x_i$ be a random variable that indicates if the $i\th$ invocation of \Sle\ returns an edge. By the multiplicative Chernoff bound,
\[\Pr\left[\frac{1}{t}\sum_{i=1}^t x_i < 0.9 p_{succ}\right] < \exp\left(-\frac{0.1^2\cdot p_{succ}\cdot t}{2}\right) \leq \exp\left(-\frac{1}{200}\cdot \frac{ (1-2\eps)\cdot \eps m}{2n\alpha} \cdot \tapp\right) < \frac{1}{12} \;,\]
where the last inequality is by the assumption that $\om \leq  (1 + \eps/4)m$ and that $\eps\leq 1/20$.
Therefore, with probability at least $11/12$, $s>0.9\cdot p_{succ}\cdot t > \sapp$.}
 Condition on this event as well.
 	
 	By Lemma~\ref{lem:det-assign-ub}, if $G$ is $\eps$-close to having arboricity at most $\alpha$, then $m(\rG) \leq 5\eps m$, so that $\ml(\rG) \leq 10\eps m$, where
 \danachange{for a subgraph $G'$, we let $\ml(G')$ denote the number of directe low edges in $G'$}.
 		For every $i$ such that $(u_i, v_i)$ is not in $\rG$,  it holds that either $u_i$ or $v_i$ is not in $\Ualpha$. Hence, by Lemma~\ref{lem:is-active}, \IsActive\ returns \YES\ on both vertices with probability at most $\eps/2$.
	For every $i$ such that $(u_i, v_i)$ is in $\rG$, we bound the probability that \IsActive\ returns \YES\ on both vertices by $1$. Since by Lemma~\ref{lem:Sle}, each directed light edge in the graph is returned with equal probability, it holds that
	\[\EX[\chi_i] \leq \frac{(\eps/2)\cdot (\ml(G)-\ml(\rG))}{\danachange{\ml}} + \frac{\ml(\rG)}{\danachange{\ml(G)}}\leq \frac{\eps}{2}+\frac{10\eps m}{\ml(G)}\leq 11.7\eps \;,\]
	where the last inequality is by the fact that $\ml\geq (1-2\eps)m$ \tnew{and the assumption that $\eps<1/20$}.
Therefore, by the multiplicative Chernoff bound, and since  $s>\sapp$,
\begin{align*}  \Pr\left[ \frac{1}{s}\sum_{i=1}^s \chi_i > \left(1+\frac{1}{40}\right)\cdot 11.7\eps \right]
<	\exp\left(-\frac{(1/40)^2\cdot 11.7\eps \cdot s}{3}\right) < 1/6.
\end{align*}
Therefore with probability at least $5/6$, $\chi <12\eps$.
By taking a union bound over all ``bad'' events, it holds that the procedure returns \YES\ with probability at least $2/3$.

Now we consider the case that $G$ is at least $20\eps$-far from having arboricity at most $3\alpha$, and prove that with probability at least $2/3$, the algorithm returns \NO. If $G$ has more than $3\eps m$ high edges, then by Claim~\ref{clm:Ehe}, with probability at least $11/12$ the procedure \Ehe\ will return \MANY\ in step~\ref{step:Ehe}, and therefore the algorithm will return \NO\ and we are done.
Also, if $s<\sapp$, then the procedure returns \NO\ in Step~\ref{step:small_s}, and we are done. Therefore, assume that  $G$ has at most $3\eps m$ high edges and that $s\geq \sapp$.

By Lemma~\ref{lem:det-assign-lb}, since $G$ is at least $20\eps$-far from having arboricity at most $3\alpha$, it holds that $m(\Gmax)>16\eps m$, implying that $\ml(\Gmax)>13\eps m$.
For every $i$ such that $(u_i, v_i)$ is in $\Gmax$,  both
$u_i$ and $v_i$ are in $\Umax$, and by Lemma~\ref{lem:is-active} and the union bound, \IsActive\ returns \YES\ on both vertices with probability at least $1-\eps$.
Also, it follows from  Lemma~\ref{lem:Sle} that every directe light edge in the graph is returned with equal probability. Hence,
\[\EX[\chi_i]\geq \frac{(1-\eps)\cdot \ml(\Gmax)}{m}\geq 12.35\eps .\]
By the multiplicative Chernoff bound and since $s > \sapp $,
\begin{align*}
	\Pr\left[ \frac{1}{s}\sum_{i=1}^s \chi_i < \left(1-\frac{1}{40}\right)\cdot {12.35\eps} \right]
	< \exp\left(-\frac{(1/40)^2\cdot 12.35\eps \cdot s}{2}\right) < 1/6.
\end{align*}
Therefore, if $G$ is $20\eps$-far from having arboricity at most $3\alpha$, then with probability at least $2/3$, $\chi>  12\eps$ and the algorithm returns \tnew{\NO}.

By Claim~\ref{clm:Ehe}, the query complexity and running time resulting from the invocation of the procedure \Ehe\ in Step~\ref{step:Ehe} are $O(\frac{n\alpha}{\eps^{\danachange{3}} m})$.
By Lemma~\ref{lem:Sle}, the running time and query complexity of the procedure \Sle\ 
are constant, and therefore the query complexity and running time of
Step~\ref{step:sample} are $O\left(\frac{n\alpha}{\eps^{\tnew{2}} m} \right)$.
In each step of the for loop there are two invocations of the procedure \IsActive\ with parameters $\gamma=\eps$ and $\delta=\eps/2$. By Lemma~\ref{lem:is-active}, the query complexity and running time resulting from these invocations are
$O\left(\left(\frac{1}{\eps}\right)^{O(\log(1/\eps))}\right)$.
Therefore, the total query complexity and running time are
$O\left(\frac{n\alpha }{\eps^{\danachange{3}} m}+
\left(\frac{1}{\eps}\right)^{O(\log(1/\eps))}\right)$.
 \end{proof}

\end{document}